%
\documentclass[runningheads]{llncs}
\usepackage{graphicx}
%
\usepackage{latexsym,amssymb,amsmath}
\usepackage{multirow}

\newcommand{\lii}{[\![}

\newcommand{\rii}{]\!]}

\usepackage{lipsum}

\newcommand\blfootnote[1]{%
  \begingroup
  \renewcommand\thefootnote{}\footnotetext{#1}%
  \addtocounter{footnote}{-1}%
  \endgroup
}

\begin{document}
\title{On decoding hyperbolic codes}
%
%

\author{Eduardo Camps-Moreno$\,^{1}$  \and Ignacio Garc\'ia-Marco$\,^{2}$ \and Hiram H. L\'opez$\,^{1}$ \and \\ Irene M\'arquez-Corbella$\,^{2}$ \and Edgar Mart\'inez-Moro$\,^{3}$ \and Eliseo Sarmiento$\,^{4}$}
\authorrunning{E. Camps-Moreno et al.}
%

\institute{
 Department of Mathematics, Virginia Tech, Blacksburg, VA, USA \\ \email{\{eduardoc, hhlopez\}@vt.edu} \and 
Facultad de Ciencias and Instituto de Matem\'aticas y Aplicaciones (IMAULL).  
Universidad de La Laguna, Spain\\ \email{\{iggarcia,imarquec\}@ull.es} \and
 Institute of Mathematics, University of Valladolid, Castilla Spain,\\ \email{edgar.martinez@uva.es}\and 
  Instituto Polit\'ecnico Nacional,
 Mexico\\
\email{esarmiento@ipn.mx}
}
\maketitle              
\begin{abstract}
This work studies several decoding algorithms for hyperbolic codes.  We use some previous ideas to describe how to decode a hyperbolic code using the largest Reed-Muller code contained in it or using the smallest Reed-Muller code that contains it. A combination of these two algorithms is proposed when hyperbolic codes are defined by polynomials in two variables. Then, we compare hyperbolic codes and Cube codes (tensor product of Reed-Solomon codes) and propose decoding algorithms of hyperbolic codes based on their closest Cube codes. Finally, we adapt to hyperbolic codes the Geil and Matsumoto's generalization of Sudan's list decoding algorithm. 
\blfootnote{E. Camps-Moreno, H.~H.~L\'opez, E.~Mart\'inez-Moro and I. M\'arquez-Corbella were partially supported by Grant TED2021-130358B-I00 funded by MCIU/AEI/10.13039/501100011033 and by the "European Union NextGenerationEU/PRTR''. \\  H. H. L\'opez was partially supported by the NSF grants DMS-2201094 and DMS-2401558.\\
I. Garc\'ia-Marco and I. M\'arquez-Corbella were partially supported by the Spanish MICINN PID2019-105896GB-I00
}
\keywords{Reed-Muller codes\and evaluation codes \and hyperbolic codes \and decoding algorithms.}
\end{abstract}

\section{Introduction}
Let $\mathbb F_q$ be a finite field with $q$ elements. Given two vectors $\mathbf x=(x_1,\ldots, x_n)$ and $\mathbf y=(y_1,\ldots, y_n)\in \mathbb F_q^n$, the \emph{Hamming distance} between $\mathbf x$ and $\mathbf y$ is defined as $d_H(\mathbf x, \mathbf y) = |\{ i \mid x_i \neq y_i\}|$, where $|\cdot |$ denotes the cardinality of the set. The \emph{Hamming weight} of $\mathbf x$ is given by $\mathrm{w}_H(\mathbf x)=d_H(\mathbf x, \mathbf 0),$ where $\mathbf 0$ denotes de zero vector in $\mathbb F_q^n.$ The \emph{support} of $\mathbf x$ is the set $\mathrm{supp}(\mathbf x)=\{ i \mid x_i \neq 0\}.$ An $[n(C),k(C),\delta(C)]_q$ {\it linear code} $C$ over $\mathbb{F}_q$ is an $\mathbb{F}_q$-vector space of $\mathbb{F}_q^{n(C)}$ with dimension $k(C)$, and minimum distance $\delta(C)=\min \{ d_H(\mathbf{c},\mathbf{c}'): \mathbf{c},\mathbf{c}' \in C, \mathbf{c} \neq \mathbf{c}' \},$ thus  its error correction capability is $t_{\mathcal C}=\left \lfloor{\frac{\delta(C)-1}{2}}\right \rfloor$. When there is no ambiguity, we write $n, k$, $\delta,$ and $t$ instead of $n(C),k(C)$, $\delta(C),$ and $t_{\mathcal C}$, respectively.

Let $\mathbb N$ be the set of non-negative integers. 
For $A\subseteq \mathbb N^m$, 
$\mathbb F_q[A]$ is the $\mathbb F_q$-vector subspace of polynomials in $\mathbb F_q[\mathbf X] = \mathbb F_q[X_1, \ldots, X_m]$ with basis given by the set of monomials
$\left\{ \mathbf{X}^{\mathbf i} = X_1^{i_1} \cdots X_m^{i_m} \mid \mathbf i = (i_1, \ldots, i_m) \in A\right\}$.
Let $\mathcal P = \mathbb F_q^m$, where $ \mathcal P = \{P_1, \ldots, P_n\}$ and  $n=|\mathcal P| = q^m.$ Define the following evaluation map
\[\begin{array}{cccc}
\mathrm{ev}_{\mathcal P}: & \mathbb F_q[X_1, \ldots, X_m] & \longrightarrow & \mathbb F_q^n\\
& f & \longmapsto & (f(P_1), \ldots, f(P_n)).
\end{array}\]
The {\it monomial code associated} to $A,$ denoted by $\mathcal C_A,$ is defined as
$$\mathcal C_A = \mathrm{ev}_{\mathcal P} (\mathbb F_q[A]) = \left\{\mathrm{ev}_{\mathcal P}(f) \mid f\in \mathbb F_q[A] \right\}.$$
For $a, b \in \mathbb R$ and $a \leq b$, we denote by $\lii a, b \rii$ the integer interval $[a,b] \cap \mathbb Z$.

\begin{definition}[Reed-Muller and Hyperbolic  codes]
\begin{itemize}\item Let $s \geq 0, m \geq 1$ be integers. The monomial code $\mathcal C_{R}$  where the set $R$ is given by  $R= \left\{\mathbf i = (i_1, \ldots, i_m) \in \lii 0,q-1 \rii^m \mid \sum_{j=1}^m i_j \leq s \right\}$, is called the \emph{Reed-Muller code} over $\mathbb{F}_q$ of degree $s$ with $m$ variables. This code is denoted by $\mathrm{RM}_q(s,m).$
\item
Let $d, m \geq 1$ be integers. The monomial code $\mathcal C_H$  where the set $H$ is given by 
$H=\left\{ \mathbf i = (i_1, \ldots, i_m) \in \lii 0,q-1\rii^m \mid \prod_{j=1}^m (q-i_j) \geq d \right\}$,
is called the \emph{hyperbolic code} over $\mathbb{F}_q$ of order $d$ with $m$ variables. This code is denoted by $\mathrm{Hyp}_q(d,m).$\end{itemize}
\end{definition} 
In our previous work \cite{our} we proved there are two optimal Reed-Muller codes such that   $\mathrm{RM}_q(s,m) \subseteq \mathrm{Hyp}_q(d,m) \subseteq \mathrm{RM}_q(s^\prime,m)$. In other words,   the largest Reed-Muller code $\mathrm{RM}_q(s,m)$ contained in $\mathrm{Hyp}_q(d,m),$ and the smallest Reed-Muller code $\mathrm{RM}_q(s^\prime,m)$ that contains $\mathrm{Hyp}_q(d,m).$  We will use that result to propose several decoding procedures.

The paper is organized as follows. In Section~\ref{21.04.04}, we describe two different algorithms to decode a hyperbolic code $\mathrm{Hyp}_q(d,m)$. These algorithms are based on the optimal Reed-Muller code that approximates to our hyperbolic code, that is, the largest Reed-Muller code $\mathrm{RM}_q(s,m)$ contained in $\mathrm{Hyp}_q(d,m)$, or the smallest Reed-Muller code $\mathrm{RM}_q(s^\prime,m)$ that contains $\mathrm{Hyp}_q(d,m)$. We will study the advantages and disadvantages in terms of efficiency and correction capability of these proposed algorithms. The choice of the algorithm to be used depends on which Reed-Muller code is closest to the hyperbolic code $\mathrm{Hyp}_q (d, m)$ as well as the efficiency or effectiveness that we need. At the end of that section, a third algorithm, which is a combination of the previous two, is adapted for $m=2,$ the case of two variables. In Section~\ref{21.04.05}  a decoding algorithm for a hyperbolic code $\mathrm{Hyp}_q(d,m)$ based on the tensor product of Reed-Solomon codes is presented. In Section~\ref{21.04.06}
 we recover Geil and Matsumoto's generalization of Sudan's List Decoding for order domain codes (\cite{GM07}) but focus explicitly on hyperbolic codes. The novel idea that we present here is that we explicitly describe each of the sets involved in the algorithm or at least we give a subset of such sets. Finally in Section~\ref{sec:con}, we compare the performance of the five decoding algorithms proposed in this paper.

 \section{Decoding  based on known Reed-Muller decoders}\label{21.04.04}
\subsection{Decoding algorithm based on the smallest Reed-Muller code}
\label{sub:dec1}
The main idea that we present in this section is well-known and works for any pair of nested linear codes.
Let $\mathcal C_1 \subseteq \mathcal C_2\subseteq \mathbb F_q^n$ be two linear codes with parameters $[n,k_1, d_1]_q$ and $[n,k_2,d_2]_q$, respectively. Observe that a decoding algorithm for $\mathcal C_2$ that corrects up to $t_{2}$ errors is also a decoding algorithm for $\mathcal C_1$ that requires the same number of operations as in $C_2,$ but corrects up to $t_2$ errors. Note that $d_2 \leq d_1$ and the difference between these two values might be huge. That is, a decoding algorithm for $\mathcal C_2$ is also a decoding algorithm for $\mathcal C_1,$ but there is a loss in the number of errors that one might expect to correct.

Given a hyperbolic code $\mathrm{Hyp}_q(d,m)$, by  \cite[Theorem~3.6]{our} we know the smallest integer $s$ such that $\mathrm{Hyp}_q(d,m) \subseteq \mathrm{RM}_q(s,m).$ Thus, for each decoding algorithm for $\mathrm{RM}_q(s,m)$ we have one for $\mathrm{Hyp}_q(d,m)$. For example, if we use the already mentioned Pellikaan-Wu's list-decoding algorithm, one can correct up to 
$q^m (1-\sqrt{(q^m -\delta(\mathrm{RM}_q(s,m))) /q^m})$ errors.

\begin{example}
We have $\mathcal C_1=\mathrm{Hyp}_9(9,2) \subseteq C_2 = \mathrm{RM}_9(s,2),$ where $s\geq 12$. Note that $\delta(C_2) = 5,$ while $\delta(\mathcal C_1) = 9$. Using the algorithm explained in this section and Pellikkan-Wu's decoder for $\mathcal C_2$, we can correct up to $2$ errors in $\mathcal C_1$ (which coincides with the error correcting capability of $\mathcal C_2$), while the error-correcting capability of the code $\mathcal C_1$ is $t_{\mathcal C_1} 
=4.$
\end{example}

\subsection{Decoding algorithm based on the largest Reed-Muller code}
\label{sub:dec2}
 Take $A\subseteq B\subseteq \mathbb N^m.$ We consider the set 
$\mathcal Q = \left\{ f \in \mathbb F_q[\mathbf X] \mid  \mathrm{supp}(f) \subseteq B \setminus A\right\},$
where the support of a polynomial 
$f\in \mathbb F_q[\mathbf X]$ is defined as $$\mathrm{supp}(f) = \left\{ \left(i_1, \ldots, i_m\right) \mid f=\sum_{j=1}^m \alpha_{i_j}\mathbf X^{i_j}, \alpha_{ i_j}\in \mathbb F_q \setminus \{0\}\right\}.$$ Observe that $|\mathcal Q| = q^{|B\setminus A|}$.

Let $\mathbf y\in \mathbb F_q^n$ be a received word. The following proposition tells us that, if the number of errors with respect to $\mathcal C_B$ is at most its error-correcting capability, i.e. $d_H(\mathbf y, \mathcal C_B) \leq t_{\mathcal C_{B}}$, then, there exists a unique polynomial $f\in \mathcal Q$ such that the nearest codeword to $\mathbf y - \mathrm{ev}_{\mathcal P}(f)$ is unique in $\mathcal C_A$.

\begin{proposition}
\label{Prop:Q}
Let $\mathbf y\in \mathbb F_q^n$ be a received word. Then there exists a polynomial $f\in \mathcal Q$ such that $d_H(\mathbf y-\mathrm{ev}_{\mathcal P}(f), \mathcal C_A) = d_H(\mathbf y, \mathcal C_B)$. Moreover, if $d_H(\mathbf y, \mathcal C_B)\leq t_{\mathcal C_B},$ the polynomial $f$ is unique.
\end{proposition}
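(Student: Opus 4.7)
The plan is to exploit a direct-sum decomposition of $\mathbb F_q[B]$. Since monomials form a basis and $B$ is the disjoint union $A\sqcup(B\setminus A)$, every $g\in\mathbb F_q[B]$ splits uniquely as $g=g_A+g_Q$ with $g_A\in\mathbb F_q[A]$ and $g_Q\in\mathcal Q$. Evaluating, this says $\mathcal C_B=\mathcal C_A+\mathrm{ev}_{\mathcal P}(\mathcal Q)$, so the $\mathcal Q$-part of a closest codeword to $\mathbf y$ in $\mathcal C_B$ is the natural candidate for $f$.

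For the existence claim I would pick $\mathbf c\in\mathcal C_B$ realising $d_H(\mathbf y,\mathcal C_B)$, choose a preimage $g\in\mathbb F_q[B]$, decompose $g=g_A+g_Q$, and set $f:=g_Q$. Because $\mathrm{ev}_{\mathcal P}(g_A)\in\mathcal C_A$, one gets
\[ d_H\bigl(\mathbf y-\mathrm{ev}_{\mathcal P}(f),\mathcal C_A\bigr)\le d_H\bigl(\mathbf y-\mathrm{ev}_{\mathcal P}(f),\mathrm{ev}_{\mathcal P}(g_A)\bigr)=d_H(\mathbf y,\mathbf c)=d_H(\mathbf y,\mathcal C_B). \]
The opposite inequality is automatic for any $f\in\mathcal Q$: for each $\mathbf c_A\in\mathcal C_A$ the sum $\mathrm{ev}_{\mathcal P}(f)+\mathbf c_A$ lies in $\mathcal C_B$, so $d_H(\mathbf y-\mathrm{ev}_{\mathcal P}(f),\mathbf c_A)=d_H(\mathbf y,\mathrm{ev}_{\mathcal P}(f)+\mathbf c_A)\ge d_H(\mathbf y,\mathcal C_B)$, and minimising over $\mathbf c_A$ yields equality.

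For uniqueness under $d_H(\mathbf y,\mathcal C_B)\le t_{\mathcal C_B}$, suppose $f_1,f_2\in\mathcal Q$ both realise the equality, with witnesses $\mathbf c_1,\mathbf c_2\in\mathcal C_A$. Then $\mathrm{ev}_{\mathcal P}(f_i)+\mathbf c_i$ are codewords of $\mathcal C_B$ within distance $t_{\mathcal C_B}$ of $\mathbf y$, so by unique bounded-distance decoding in $\mathcal C_B$ they coincide, giving $\mathrm{ev}_{\mathcal P}(f_1-f_2)\in\mathcal C_A$. The main obstacle is converting this into $f_1=f_2$: in general $\mathrm{ev}_{\mathcal P}$ has the kernel generated by $X_j^q-X_j$, so the conclusion requires that $\{\mathrm{ev}_{\mathcal P}(\mathbf X^{\mathbf i})\}_{\mathbf i\in B}$ be linearly independent. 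This holds in the intended setting $B\subseteq\lii 0,q-1\rii^m$ (true for the hyperbolic and Reed--Muller cases used in the paper), and then $f_1-f_2$, whose monomial support lies in $B\setminus A$, must vanish because its evaluation would lie in the complementary subspace spanned by $\{\mathrm{ev}_{\mathcal P}(\mathbf X^{\mathbf i})\}_{\mathbf i\in A}$.
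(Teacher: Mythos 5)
Your proof is correct and follows essentially the same route as the paper: decompose the polynomial of a closest $\mathcal C_B$-codeword into its $A$-part and $(B\setminus A)$-part for existence, and use that two $\mathcal C_B$-codewords within distance $t_{\mathcal C_B}$ of $\mathbf y$ must coincide for uniqueness. You are in fact slightly more careful than the paper, which omits both the reverse inequality $d_H(\mathbf y-\mathrm{ev}_{\mathcal P}(f),\mathcal C_A)\ge d_H(\mathbf y,\mathcal C_B)$ and the linear independence of $\{\mathrm{ev}_{\mathcal P}(\mathbf X^{\mathbf i})\}_{\mathbf i\in B}$ (valid since $B\subseteq\lii 0,q-1\rii^m$ in all cases used) needed to pass from $\mathrm{ev}_{\mathcal P}(f_1-f_2)\in\mathcal C_A$ to $f_1=f_2$.
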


\begin{proof}
We first prove the existence of such polynomial. Set $t := d_H(\mathbf y, \mathcal C_B)$, then there exists a codeword $\mathbf z=\mathrm{ev}_{\mathcal P}(g) \in \mathcal C_B$, with  $\mathrm{supp}(g) \subseteq B$, such that $d_H(\mathbf y, \mathbf z) = t$. Writing $g = f + \tilde{f},$ with $\mathrm{supp}(f) \subseteq B\setminus A$ and $\mathrm{supp}(\tilde{f}) \subseteq A$, we have $t=d_H(\mathbf y, \mathbf z) = d_H(\mathbf y - \mathrm{ev}_{\mathcal P}(f), \mathrm{ev}_{\mathcal P}(\tilde{f})),$
 with $f \in \mathcal Q$  and $\mathrm{ev}_{\mathcal P}(\tilde{f}) \in \mathcal C_A$.

Now we will prove the uniqueness when $t \leq t_{\mathcal C_B}$. Let $h\in \mathcal Q$ be another possible option. There exist two  error-vectors $\mathbf e_1, \mathbf e_2 \in \mathbb F_q^n$ with weight smaller or equal to $t_{\mathcal C_B},$ such that 
$\mathrm{ev}_{\mathcal P}(f) + \mathbf e_1 = \mathrm{ev}_{\mathcal P}(h) + \mathbf e_2 = \mathbf y.$
Therefore $\mathrm{ev}_{\mathcal P}(f) - \mathrm{ev}_{\mathcal P}(h) = \mathbf e_2 - \mathbf e_1$ is an element of $\mathcal C_B,$ with weight at most $2t_{\mathcal C_B} < \delta (\mathcal C_B)-1.$ Thus, the difference $\mathrm{ev}_{\mathcal P}(f) - \mathrm{ev}_{\mathcal P}(h)$ must be zero and $f = h$.
\qed
\end{proof}

Let $\mathcal C_A\subseteq \mathcal C_B$ monomial codes such that $A\subseteq B \subseteq \mathbb N^m.$ Let $\mathrm{Dec}_{\mathcal C_A}$ be a decoding algorithm for $\mathcal C_A,$ which corrects up to $E$ errors. By  Proposition \ref{Prop:Q}, we can define the following decoding algorithm for $\mathcal C_B$ that corrects also up to $E$ errors.


\noindent{\bf Initialization:} Let $\mathbf y\in \mathbb F_q^n$ be the received word. For each $f \in \mathcal Q$ we follow these steps \newline
\noindent {\bf Step 1.} Compute $\mathbf y - \mathrm{ev}_{\mathcal P}(f)$.\newline
\noindent {\bf Step 2.} Decode using $\mathrm{Dec}_{\mathcal C_A}$ the word $\mathbf y - \mathrm{ev}_{\mathcal P}(f)$.\newline
\noindent {\bf Step 3.} Denote by $L$ the output list of {\bf Step 2}, that is $L = \{\mathbf c \in \mathcal C_A \, \vert \, d_H(\mathbf c, \mathbf y - \mathrm{ev}_{\mathcal P}(f))\leq E\}.$ If $L$ is not empty, then for each $\mathbf c_A \in L$, we add to the output list $\mathbf c_A + \mathrm{ev}_{\mathcal P}(f)$.

The previous list-decoding algorithm for $\mathcal C_B$ corrects up to $E$ errors and requires $q^{|B\setminus A|}$ calls to $\mathrm{Dec}_{\mathcal C_A}$. Moreover, if $E \leq t_{\mathcal C_B}$, one can easily transform the previous list-decoding algorithm into the following unique decoding one:

\noindent{\bf Initialization:} Let $\mathbf y\in \mathbb F_q^n$ be the received word. Assume that the number of errors is at most $E \leq t_{\mathcal C_B}$.\newline
\noindent {\bf Step 1} Compute $\mathbf y - \mathrm{ev}_{\mathcal P}(f)$ for some $f\in \mathcal Q$.\newline
\noindent {\bf Step 2} Decode using the decoder $\mathrm{Dec}_{\mathcal C_A}$ the word $\mathbf y - \mathrm{ev}_{\mathcal P}(f)$.\newline
\noindent {\bf Step 3} If the result of {\bf Step 2} is codeword $\mathbf c_A$ such that 
$\mathrm w_H(y-ev_P(f), \mathbf c_A)\le t_{\mathcal C_B}$, then return $\mathbf c_A + \mathrm{ev}_{\mathcal P}(f)$.\newline
\noindent {\bf Step 4} Otherwise, go back to {\bf Step 1}. 

The correctness of this algorithm is justified by Proposition \ref{Prop:Q} and the fact that $E \leq t_{\mathcal C_B} \leq t_{\mathcal C_A}$ and, hence, the output in {\bf Step 2} has at most one element.

The algorithms proposed involve at most $|\mathcal Q| = q^{|B \setminus A|}$ calls to $\mathrm{Dec}_{\mathcal C_A}$, which could be interpreted as an inefficient algorithm from a theoretical point of view. Nevertheless, for practical purposes, if the difference between the sets $B$ and $A$ is small, this algorithm defines an efficient algorithm for $\mathcal C_B$ that corrects up to $E$ errors, as long as an efficient algorithm for $\mathcal C_A$ exists and corrects the same number of errors.

\subsection{Intermediate case.}
\label{sub:dec3}
In this section we are going to study an intermediate proposal between the two previous options. We will do our study for the case of two variables.

\begin{proposition}
\label{prop:intermediate}
Let $s$ be the smallest integer such that $\mathrm{Hyp}_q(d,2) \subseteq \mathrm{RM}_q(s,2)$. Let $H,R \subseteq \mathbb N^2$ such that $\mathcal C_H = \mathrm{Hyp}_q(d,2)$ and $\mathcal C_R = \mathrm{RM}_q(s-1,2)$. Then
\[|H-R|\leq 2 ( \sqrt[4]{d} + 1).\]
\end{proposition}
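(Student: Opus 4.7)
The plan is to reduce the cardinality $|H \setminus R|$ to counting integer points on a single affine line. Since $s$ is defined as the smallest integer with $\mathrm{Hyp}_q(d,2) \subseteq \mathrm{RM}_q(s,2)$, one has $s = \max\{i_1 + i_2 \mid (i_1, i_2) \in H\}$, so every $(i_1, i_2) \in H \setminus R$ must satisfy $i_1 + i_2 = s$ (at most $s$ because $H \subseteq R_s$, and at least $s$ because $(i_1, i_2) \notin R$). Therefore
\[
|H \setminus R| = \#\left\{(i_1, i_2) \in \lii 0, q-1\rii^2 : i_1 + i_2 = s,\ (q-i_1)(q-i_2) \geq d\right\}.
\]

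After the change of variables $j_k = q - i_k$ and $T := 2q - s$, this becomes the number of $(j_1, j_2) \in \lii 1, q \rii^2$ with $j_1 + j_2 = T$ and $j_1 j_2 \geq d$. Parametrising by $j_1$ and solving $j_1(T - j_1) \geq d$, the real-valued solutions form the interval $\left[(T - \sqrt{T^2 - 4d})/2,\ (T + \sqrt{T^2 - 4d})/2\right]$, so the count is at most $\sqrt{T^2 - 4d} + 1$.

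The heart of the argument is bounding $T^2 - 4d$. The minimality of $s$ forces $T$ to be the smallest value of $j_1 + j_2$ achievable under the constraints, so no pair $(j_1, j_2) \in \lii 1, q \rii^2$ with $j_1 + j_2 = T - 1$ satisfies $j_1 j_2 \geq d$. Since $T \leq 2q$, the midpoint $(T-1)/2$ of such a segment lies in $\lii 1, q \rii$, so the maximum product on the line $j_1 + j_2 = T - 1$ inside $\lii 1, q \rii^2$ equals $\lfloor (T-1)^2 / 4 \rfloor$. Hence $(T-1)^2/4 < d$, i.e.\ $(T-1)^2 < 4d$, which yields $T^2 - 4d < 2T - 1 < 4\sqrt{d} + 1$.

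The conclusion comes from a single inequality check: expanding $(2\sqrt[4]{d} + 1)^2 = 4\sqrt{d} + 4\sqrt[4]{d} + 1 \geq 4\sqrt{d} + 1$, one obtains $\sqrt{T^2 - 4d} \leq 2\sqrt[4]{d} + 1$, so $|H \setminus R| \leq 2\sqrt[4]{d} + 2 = 2(\sqrt[4]{d} + 1)$. The most delicate point I anticipate is the bound on $T$: one must handle simultaneously the integrality of the maximizing pair and the ambient box constraint $j_k \leq q$, which is why the inequality $T \leq 2q$ needs to be verified before invoking $\lfloor (T-1)^2/4 \rfloor < d$.
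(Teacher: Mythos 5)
Your argument is correct and follows essentially the same route as the paper's proof: both reduce $H\setminus R$ to the integer points on the line $i_1+i_2=s$ with $(q-i_1)(q-i_2)\ge d$, bound their number by $\sqrt{T^2-4d}+1$ (the paper writes this as $2\Delta+1$ after completing the square), and control the discriminant by using that the line $i_1+i_2=s+1$ contains no point of $H$. The only differences are cosmetic: you extract $(T-1)^2<4d$ from the near-central lattice point of that line in a way that handles both parities of $s$ at once and avoids invoking the closed form $s=\lfloor 2(q-\sqrt{d})\rfloor$, while the single degenerate case $T=2$ (i.e.\ $d=1$), where your midpoint claim fails because that line meets the box in no lattice point, still satisfies $(T-1)^2<4d$ trivially.
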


\begin{proof}
We have that $\mathcal C_H = \mathrm{Hyp}_q(d,2) \subseteq \mathrm{RM}_q(s,2)$. An easy observation is that whenever $(i_1,i_2) \in H - R$, then  $i_1 + i_2 = s$. Thus, 
\begin{equation} H - R = \{(t, s-t) \, \vert \, t \in \lii 0, s \rii {\rm\ and\ } (q- t) (q - s + t) \geq d\}. \end{equation}
Hence we are looking for those $t \in \lii 0, s \rii$ such that $(q- t) (q - s + t) \geq d$, or equivalently, 
\[\left( q- \frac{s}{2} \right)^2 - \left(\frac{s}{2} - t \right)^2 \geq d \Rightarrow \left|\frac{s}{2}-t\right|\leq \underbrace{\sqrt{\left( q-\frac{s}{2}\right)^2 -d}}_{\Delta}.\]
If we compute $\Delta$, then: $|H- R|\leq 2\Delta + 1$.
We separate the proof in two cases depending on the parity of $s$. Recall that, by \cite[Proposition 3.2]{our}, we know that $s=\left \lfloor 2(q-\sqrt{d})\right \rfloor$.

\begin{enumerate}
    \item If $s+1=2r$, then $(r,r) \notin H$. As a consequence, $(q-\frac{s+1}{2})^2 < d$. Or equivalently, $\left(q-\frac{s}{2} \right)^2 - q + \frac{2s+1}{4}< d$.  Thus,
    $$\Delta \leq \sqrt{q-\frac{2q+1}{4}}< \sqrt{q-\frac{4(q-\sqrt{d})-1}{4}} = \sqrt{\sqrt{d}+\frac{1}{4}} < \sqrt[4]{d}+\frac{1}{2}, \hbox{ for } d\geq 1.$$
    
    \item If $s=2r$, then $(r,r+1)\notin H$. As a consequence,
    $\left(q-\frac{s}{2}\right)\left(q-\frac{s}{2}-1\right)< d$. Or equivalently, $\left(q-\frac{s}{2} \right)^2 - \left( q-\frac{s}{2}\right)<d$. Thus,
    $$\Delta \leq \sqrt{q-\frac{s}{2}}< \sqrt{q-\frac{2(q-\sqrt{d})-q}{2}} = \sqrt{\sqrt{d}+\frac{1}{2}}< \sqrt[4]{d}+\frac{1}{2}\hbox{ for } d\geq 1.$$
\end{enumerate}
\qed
\end{proof}

Now we have the ingredients to define a new decoding algorithm for
the code $\mathcal C_H=\mathrm{Hyp}_q(d,2)$. Let $s$ be the smallest integer such that $\mathrm{Hyp}_q(d,2)\subseteq \mathrm{RM}_q(s,2)$. See \cite[Proposition 3.2]{our} for a precise description of such parameter $s$. We define $R\subseteq \mathbb N^2$ such that $\mathcal C_R=\mathrm{RM}_q(s-1,2)$. By Proposition \ref{prop:intermediate} we know that $|H- R|\leq c \sqrt[4]{d}$. 
Let $\mathrm{Dec}_{R}$ be a decoding algorithm for $\mathcal C_R$ that corrects up to $t_R=\left \lfloor \frac{d(\mathcal C_R)-1}{2}\right \rfloor$ errors. Unifying the ideas of the above decoding algorithms (see sections \ref{sub:dec1} and \ref{sub:dec2}), we have a decoding algorithm for $\mathcal C_H$ that corrects up to $E$ errors and requires $q^{c\sqrt[4]{d}}$ calls to $\mathrm{Dec}_{R}$. If we compare it with the algorithm proposed in Section \ref{sub:dec1}, we have a poorer complexity but we can correct more errors. 
This approach is particulary well suited when  $d'=\delta(\mathrm{RM}_q(s,2)) \geq q$ and $|H-R|$ is small. In this case, $\delta(\mathcal C_R) = d' - q + 1$ and, as a consequence, the correction capability of the auxiliary Reed-Muller code we are using to decode is increased by around $q/2$. 

\begin{example}
Consider $\mathcal C= \mathrm{Hyp}_q(d,2)$ with $q=11,$ and $d=32.$ See Figure \ref{example-section3.3} for a representation of this example. Throughout this example, we use Pellikaan-Wu's list-decoder (PW) for Reed-Muller codes.
By \cite[Proposition 3.2, Proposition 4.2]{our}   we have that 
  $\mathrm{RM}_q(s',2) \subseteq \mathcal C \subseteq \mathrm{RM}_q(s,2)$   for $s'\leq 8$ and  $s\geq 10.$  
\begin{enumerate}

\item Thus, $\mathcal C \subseteq \mathcal C_{R_1} = \mathrm{RM}_q(10,2)$. Using Section \ref{sub:dec1} and applying PW algorithm on $\mathcal C_{R_1}$ we have an efficient decoding algorithm for $\mathcal C$ that \textbf{corrects up to $5$ errors} and requires just one call to $\mathrm{Dec}_{\mathcal C_{R_1}}$.

\item Thus, $\mathcal C_{R_2} = \mathrm{RM}_q(8,2) \subseteq \mathcal C$. 
Using Section \ref{sub:dec2} to decode $\mathcal C$ and applying PW algorithm on $\mathcal  C_{R_2}$ we have a decoding algorithm for $\mathcal C$ that {\bf corrects up to $16$ errors}. This algorithm uses the decoder $\mathrm{Dec}_{\mathcal  C_{R_2}}$ plus some brute force. In particular, it requires $q^{11}$ calls to $\mathrm{Dec}_{\mathcal  C_{R_2}},$ where $11 = |H-R_2|$, and $H$ and $R_2$ are the sets in $\mathbb N^2$ that define $\mathcal C$ and $\mathcal  C_{R_2}$, respectively.

\item The intermediate proposal between the two previous options is described in Section \ref{sub:dec3}. More precisely, we consider $\mathcal  C_{R_3} =  \mathrm{RM}_q(9,2)$ and we perform $q^{|H-R_3|} = q^5$ calls to the PW decoder for $\mathcal C_{R_3}$ to {\bf correct up to $10$ errors} in $\mathcal C$, where $R_3$ is the set in $\mathbb N^2$ that defines $\mathcal  C_{R_3}$.

\end{enumerate}

\begin{figure}[ht]
 \centering{ \includegraphics[scale=0.35]{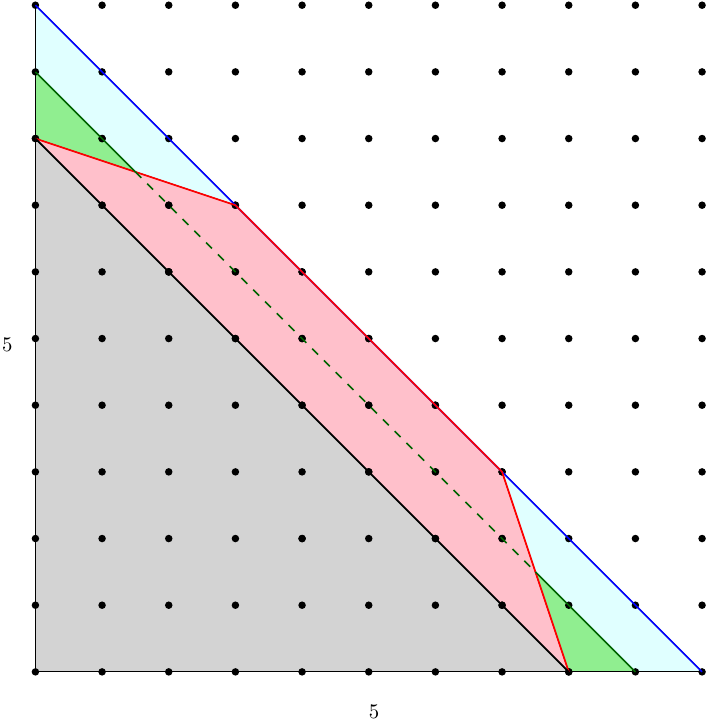}}
  \caption{Let $q=11$ and $m=2$. In this Figure the code $\mathrm{Hyp}_{q}(32,2)$ is equal to $\mathcal C_H$, $\mathrm{RM}_q(10,2)$ is equal to $\mathcal C_{R_1}$, $\mathrm{RM}_q(8,2)$ is equal to $\mathcal C_{R_2}$ and the code $\mathrm{RM}_q(9,2)$ equals to $\mathcal C_R,$ where $H$, $R_1$, $R_2$ and $R$ are the sets of lattice points below the red, the blue, the black and the green curve, respectively.}
  \label{example-section3.3}
  \end{figure}
\end{example}


\section{Decoding  based on tensor products of RS codes}\label{21.04.05}

Let $f\in \mathbb F_q[X_1, \ldots, X_m]$ be a polynomial. The maximum degree of $f$ with respect to $X_i$ is denoted by $\deg_{X_i}(f).$ Now we define a family of monomial codes that we call cube codes of order $s$, which consist of the evaluation of the polynomials $f\in \mathbb F_q[X_1,\ldots,X_m]$ that satisfy that $\mathrm{deg}_{X_i}(f)\leq s$ for each $1\le i\le m$, on the $q^m$ points of $\mathbb F_q^m.$ The formal definition is as follows.

\begin{definition}[Cube codes]
Take $s\in \mathbb N$ and define $A= \lii 0, s\rii^m.$ The monomial code $\mathcal C_A,$ denoted by $\mathrm{Cube}_q(s,m),$ is called the \emph{cube code} over $\mathbb F_q$ of order $s\geq 0$ with $m\geq 1$ variables.
\end{definition}
A Reed-Solomon code can be seen as a Reed-Muller code $\mathrm{RM}_q(d,m)$ of order $d$ and $m=1$ variables. That is, the Reed-Solomon code of order $s$ is defined as
$$\mathrm{RS}_q(s)=\left\{ \mathrm{ev}_{\mathcal P}(f) \mid f\in \mathbb F_q[X] \hbox{ and } \deg(f)\leq s\right\} = \mathrm{RM}_q(s,1).$$

Reed-Solomon codes are one of the most popular and important families of codes. They are maximum distance separable (MDS) codes, thus a $\mathrm{RS}_q(s)$ is a code with parameters $[q,s+1,q-s]_q.$ Reed-Solomon codes have efficient decoding algorithms. In the literature, the two primary decoding algorithms for Reed-Solomon codes are the Berlekamp-Massey algorithm \cite{Be}, and the Sugiyama et al. adaptation of the Euclidean algorithm \cite{Su}, both designed to solve a key equation.

\begin{remark}
Asume that the polynomials that define the Reed-Solomon code $\mathrm{RS}_q^i(s)$ belong to $\mathbb{F}_q[X_i].$ It is easy to see that the cube code $\mathrm{Cube}_q(s,m)$ is the tensor product of the $m$ Reed-Solomon codes $\mathrm{RS}_q^1(s), \ldots, \mathrm{RS}_q^m(s).$ In other words, we have that
\[\mathrm{Cube}_q(s,m) = \mathrm{RS}_q^1(s) \otimes \cdots \otimes \mathrm{RS}_q^m(s).\]
\end{remark}
\begin{proposition}
The minimum distance of the cube code $\mathrm{Cube}_q(s,m)$ coincides with its footprint bound for the deglex monomial ordering. Therefore, the code  $\mathcal C=\mathrm{Cube}_q(s,m)$ has length $n(\mathcal C)=q^m$, dimension $k(\mathcal C) = (s+1)^m,$ and minimum distance $\delta(\mathcal C) = (q-s)^m$.
\end{proposition}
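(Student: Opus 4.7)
The plan is to combine a footprint-type upper bound on the number of common zeros of any representative polynomial with an explicit codeword realizing that bound. The length $n(\mathcal{C})=q^m$ is immediate from $|\mathcal{P}|=q^m$.

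For the dimension and to set up the footprint, I would note that the vanishing ideal of $\mathcal{P}=\mathbb{F}_q^m$ is $I=(X_1^q-X_1,\ldots,X_m^q-X_m)$, whose generators already form a Gr\"obner basis under any monomial order (in particular deglex) with leading terms $X_1^q,\ldots,X_m^q$. Hence the footprint of $I$ equals $\Delta(I)=\lii 0,q-1\rii^m$, with $|\Delta(I)|=q^m$. Since $A=\lii 0,s\rii^m\subseteq \Delta(I)$, the monomial classes $\{\mathbf{X}^{\mathbf{i}}+I:\mathbf{i}\in A\}$ are linearly independent in $\mathbb{F}_q[\mathbf{X}]/I$, so $\mathrm{ev}_{\mathcal{P}}$ is injective on $\mathbb{F}_q[A]$ and $k(\mathcal{C})=|A|=(s+1)^m$.

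For the minimum distance I would apply the footprint bound as follows. If $0\neq f\in\mathbb{F}_q[A]$ has deglex-leading monomial $\mathbf{X}^{\mathbf{i}}$, then $\mathbf{i}\in A$, so $i_k\leq s$ for every $k$. Because $\mathrm{in}(I+(f))\supseteq (X_1^q,\ldots,X_m^q,\mathbf{X}^{\mathbf{i}})$, the footprint $\Delta(I+(f))$ is contained in $\Delta(I)\setminus (\mathbf{i}+\mathbb{N}^m)$, whose cardinality is $q^m-\prod_{k=1}^m(q-i_k)$ since $(\mathbf{i}+\mathbb{N}^m)\cap \Delta(I)$ is the box $\lii i_1,q-1\rii\times\cdots\times \lii i_m,q-1\rii$. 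Therefore the number of zeros of $f$ in $\mathcal{P}$ is at most $|\Delta(I+(f))|\leq q^m-\prod_{k=1}^m(q-i_k)$, giving $\mathrm{w}_H(\mathrm{ev}_{\mathcal{P}}(f))\geq \prod_{k=1}^m(q-i_k)\geq (q-s)^m$, which is the footprint lower bound on $\delta(\mathcal{C})$.

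To see this bound is attained, I would exhibit a codeword of weight $(q-s)^m$: fix $s$ distinct elements $a_{k,1},\ldots,a_{k,s}\in\mathbb{F}_q$ for each $k$ and take $f=\prod_{k=1}^m\prod_{j=1}^s(X_k-a_{k,j})$. Then $\deg_{X_k}(f)=s$, so $f\in \mathbb{F}_q[A]$, and $f(b_1,\ldots,b_m)\neq 0$ exactly when every $b_k$ avoids the $s$ chosen values, producing precisely $(q-s)^m$ nonzero coordinates. The only delicate step is framing the footprint bound correctly; the counting once $\mathbf{X}^{\mathbf{i}}$ is in hand, as well as the construction of the extremal codeword, is routine.
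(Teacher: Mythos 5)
Your proof is correct, but it follows a genuinely different route from the paper's. The paper disposes of the proposition in one line by invoking the tensor product structure $\mathrm{Cube}_q(s,m) = \mathrm{RS}_q^1(s)\otimes\cdots\otimes\mathrm{RS}_q^m(s)$ and the standard fact that the parameters of a tensor product code are the products of the factors' parameters, so $n=q^m$, $k=(s+1)^m$, $\delta=(q-s)^m$ follow from the MDS parameters $[q,s+1,q-s]_q$ of each Reed--Solomon factor. You instead argue directly from the footprint: the vanishing ideal $I=(X_1^q-X_1,\ldots,X_m^q-X_m)$ has the stated Gr\"obner basis (the leading terms are pairwise coprime), so $\Delta(I)=\lii 0,q-1\rii^m$ and injectivity of $\mathrm{ev}_{\mathcal P}$ on $\mathbb F_q[A]$ gives the dimension; the inclusion $\Delta(I+(f))\subseteq\Delta(I)\setminus(\mathbf i+\mathbb N^m)$ yields $\mathrm w_H(\mathrm{ev}_{\mathcal P}(f))\geq\prod_k(q-i_k)\geq(q-s)^m$; and your product of linear factors attains the bound. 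Your version is longer but self-contained, and it has the advantage of actually proving the first assertion of the statement --- that the minimum distance \emph{coincides with the footprint bound} --- which the paper's one-line proof leaves implicit; the paper's version is shorter and reuses the tensor product remark already established. The only implicit hypothesis in your extremal construction is $s\leq q-1$ (needed to choose $s$ distinct elements and for the formula $(q-s)^m$ to be meaningful), which is the same assumption the statement itself tacitly makes.
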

\begin{proof}
This is a consequence of the fact that the cube code is the tensor product of Reed-Solomon codes.
\qed
\end{proof}

\begin{proposition}
\label{prop:highest-cube}
Take $d\in \mathbb N.$ Then,
\begin{itemize} \item[{\rm (a)}] $\mathrm{Cube}_q(s,m) \subseteq \mathrm{Hyp}_q(d,m)$ if and only if $s \leq q - \sqrt[m]{d}$.
\item[{\rm (b)}] 
$\mathrm{Hyp}_q(d,m) \subseteq \mathrm{Cube}_q(s',m)$ if and only if $s' \geq  q - \left\lceil \frac{d}{q^{m-1}} \right\rceil.$
\end{itemize}

\end{proposition}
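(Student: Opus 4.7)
The plan is to reduce both inclusions of codes to the corresponding set-theoretic inclusions of exponents, and then perform a straightforward optimization on the sets involved. The reduction relies on the standard fact that, for $A, B \subseteq \lii 0, q-1\rii^m$, the inclusion $\mathcal C_A \subseteq \mathcal C_B$ is equivalent to $A \subseteq B$: one direction is obvious from $\mathbb F_q[A] \subseteq \mathbb F_q[B]$, and the converse follows from the fact that $\{\mathrm{ev}_{\mathcal P}(\mathbf X^{\mathbf i})\}_{\mathbf i \in \lii 0, q-1\rii^m}$ is linearly independent in $\mathbb F_q^n$ (since $X^q - X$ vanishes identically on $\mathbb F_q$, these $q^m$ evaluations form a basis). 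Since the cube and hyperbolic codes can each be described by subsets of $\lii 0, q-1\rii^m$ (using $s \le q-1$, which is implied by $s \le q - \sqrt[m]{d}$ when $d \ge 1$), each part reduces to checking a containment of exponent sets.

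For part (a), I would observe that $\lii 0, s\rii^m \subseteq H$ holds if and only if every point $(i_1, \ldots, i_m)$ of the cube satisfies $\prod_{j=1}^m (q - i_j) \geq d$. Because $q - i_j$ is decreasing in $i_j$, the product is minimized over the cube at the corner $(s, s, \ldots, s)$, where it equals $(q-s)^m$. Hence the condition becomes $(q-s)^m \geq d$, which is equivalent to $s \leq q - \sqrt[m]{d}$.

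For part (b), I would dually reformulate $H \subseteq \lii 0, s'\rii^m$ as the requirement that $\max\{i_j : (i_1, \ldots, i_m) \in H\} \le s'$ for every $j$. By the symmetry of $H$ in its coordinates, it suffices to bound $i_1$. Maximizing $i_1$ over $H$ means finding the largest $i_1$ such that there exists a choice of $i_2, \ldots, i_m \in \lii 0, q-1\rii$ with $(q - i_1)\prod_{j \geq 2}(q - i_j) \geq d$. Since $\prod_{j \geq 2}(q - i_j) \leq q^{m-1}$ with equality at $i_2 = \cdots = i_m = 0$, the maximum $i_1$ is the largest integer with $(q - i_1)\, q^{m-1} \geq d$. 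As $q - i_1$ is a positive integer, this is equivalent to $q - i_1 \geq \lceil d / q^{m-1} \rceil$, giving $\max i_1 = q - \lceil d / q^{m-1} \rceil$, and therefore $H \subseteq \lii 0, s'\rii^m$ if and only if $s' \geq q - \lceil d / q^{m-1} \rceil$.

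The arguments are elementary once the reduction to set containment is in place; the only step that requires a small amount of care is the use of the ceiling function in part (b), which comes precisely from the integrality of $q - i_1$. I do not expect any genuine obstacle, and the whole proof should fit in a few lines.
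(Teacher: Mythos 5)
Your proof is correct and follows essentially the same route as the paper's: both reduce the code inclusions to containments of exponent sets and then check the extremal points, namely the corner $(s,\ldots,s)$ for part (a) and the point $(q-\lceil d/q^{m-1}\rceil,0,\ldots,0)$ for part (b). The only difference is that you make explicit the reduction from code containment to set containment via linear independence of the monomial evaluations, which the paper leaves implicit.
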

\begin{proof}
Let $H \subset \lii 0,q-1 \rii^m$ such that $\mathcal C_H =  \mathrm{Hyp}_q(d,m)$. If $\mathbf i = (i_1,\ldots,i_m)$ satisfies that $0 \leq i_j \leq q - \sqrt[m]{d}$ for all $1\le j\le m$, then $\prod_{j = 1}^m (q - i_j) \geq d.$ Hence, $\mathrm{Cube}_q(s,m) \subseteq \mathrm{Hyp}_q(d,m)$ for all $s \leq q - \sqrt[m]{d}$. If $s > q - \sqrt[m]{d}$, then $(q-s)^m < d.$ Therefore $(s,\ldots,s) \notin H$ and $\mathrm{Cube}_q(s,m)  \not\subseteq \mathrm{Hyp}_q(d,m)$.

 We now check that $\mathrm{Hyp}_q(d,m) \subseteq \mathrm{Cube}_q(s',m)$ if and only if    $s' \geq r :=  q - \left\lceil \frac{d}{q^{m-1}} \right\rceil$. For $s' \geq r$, it suffices to observe that for all $\mathbf i = (i_1,\ldots,i_m) \in H$, we have that $\max\{i_j\} \leq r$. Otherwise $\prod_{j =1}^m (q-i_j) \leq q^{m-1} (q-\max\{i_j\}) < q^{m-1}(q-r) \leq d$, a contradiction. Moreover, we have that $(r,0,\ldots,0) \in H$, so if $s' < r$ then $\mathrm{Hyp}_q(d,m) \not\subseteq \mathrm{Cube}_q(s',m)$.
 \qed
\end{proof}

\begin{theorem}
\label{Prop:dec_cube}
Let $\mathrm{Dec}_{\mathrm{RS}}$ be a decoding algorithm for $\mathrm{RS}_q(s)$ that corrects up to $t_{\mathrm{RS}}$ errors. Then, there exists a decoding algorithm $\mathrm{Dec}_{\mathrm{Cube}_q(s,m)}$ for $\mathrm{Cube}_q(s,m)$ that corrects up to $(t_{\mathrm{RS}}+1)^m-1$ errors and requires calling $f(m)$ times the decoding algorithm $\mathrm{Dec}_{\mathrm{RS}},$ where 
$$f(m)  = \sum_{i=0}^{m-1} (s+1)^{m-1-i} q^{i} \leq m q^{m-1} = \frac{mn}{q} \hbox{, for all } m \geq 1.$$
\end{theorem}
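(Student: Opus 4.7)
The plan is to proceed by induction on $m$. The base case $m=1$ is immediate: $\mathrm{Cube}_q(s,1) = \mathrm{RS}_q(s)$, so taking $\mathrm{Dec}_{\mathrm{RS}}$ itself as the decoder corrects $t_{\mathrm{RS}} = (t_{\mathrm{RS}}+1)^{1}-1$ errors with $f(1) = 1$ call. For the inductive step, I would view a received word $\mathbf{y} \in \mathbb{F}_q^{q^m}$ as an array indexed by $\mathbb{F}_q \times \mathbb{F}_q^{m-1}$, where the row indexed by $\alpha \in \mathbb{F}_q$ corresponds to evaluating $f(\alpha, X_2, \ldots, X_m)$ and is therefore a $\mathrm{Cube}_q(s,m-1)$ codeword, while the column indexed by $\boldsymbol{\beta} \in \mathbb{F}_q^{m-1}$ corresponds to the univariate polynomial $f(X_1,\boldsymbol{\beta})$ of degree at most $s$, hence is an $\mathrm{RS}_q(s)$ codeword.

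The recursive algorithm has three steps. First, decode each of the $q^{m-1}$ columns using $\mathrm{Dec}_{\mathrm{RS}}$, costing $q^{m-1}$ RS calls. Second, choose any $s+1$ distinct values $\alpha_1,\ldots,\alpha_{s+1} \in \mathbb{F}_q$ (possible since $s+1 \leq q$) and decode the corresponding $s+1$ rows of the matrix produced in Step 1 by calling $\mathrm{Dec}_{\mathrm{Cube}_q(s,m-1)}$ from the inductive hypothesis, at a cost of $(s+1)\,f(m-1)$ RS calls. Third, Lagrange-interpolate along $X_1$: fixing any $\boldsymbol{\beta}$, the polynomial $f(X_1,\boldsymbol{\beta})$ has degree at most $s$ and is therefore determined by its values at $\alpha_1,\ldots,\alpha_{s+1}$, so the whole codeword is recovered.

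Correctness hinges on one pigeonhole estimate, which is the main obstacle worth proving carefully: if the number of errors $e$ satisfies $e \leq (t_{\mathrm{RS}}+1)^m - 1$, then at most $(t_{\mathrm{RS}}+1)^{m-1} - 1$ columns can carry more than $t_{\mathrm{RS}}$ errors, since $k$ such columns would imply $k(t_{\mathrm{RS}}+1) \leq e < (t_{\mathrm{RS}}+1)^m$. Consequently, after Step 1 every column with at most $t_{\mathrm{RS}}$ errors is decoded correctly, so the array differs from the true codeword in at most $(t_{\mathrm{RS}}+1)^{m-1}-1$ entire columns, and each of the $s+1$ rows selected in Step 2 contains at most $(t_{\mathrm{RS}}+1)^{m-1}-1$ erroneous entries and therefore decodes correctly by induction. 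Finally, the complexity follows from the recurrence
\[
f(m) = (s+1)\,f(m-1) + q^{m-1}, \qquad f(1)=1,
\]
whose solution is the claimed $f(m)=\sum_{i=0}^{m-1}(s+1)^{m-1-i}q^i$; and the bound $f(m) \leq m q^{m-1}$ is obtained term by term using $s+1 \leq q$. Everything else — the recursion unrolling, the interpolation, and the complexity count — is routine, with the pigeonhole estimate being the only place where real work is needed.
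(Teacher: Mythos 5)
Your proof is correct, but it is the ``transpose'' of the one in the paper. The paper slices the array along $X_1$: it runs the recursive decoder $\mathrm{Dec}_{\mathrm{Cube}_q(s,m-1)}$ on all $q$ slices $f(\alpha_\ell,X_2,\ldots,X_m)$, shows by the analogous pigeonhole count that fewer than $t_{\mathrm{RS}}+1$ slices are too corrupted to decode (the GOOD/BAD dichotomy), and then recovers each of the $(s+1)^{m-1}$ coefficient polynomials $h_{j_2,\ldots,j_m}(X_1)$ by running $\mathrm{Dec}_{\mathrm{RS}}$ on the vector of decoded values $(h_{j_2,\ldots,j_m}(\alpha_1),\ldots,h_{j_2,\ldots,j_m}(\alpha_q))$, which carries at most $t_{\mathrm{RS}}$ errors; this yields the recurrence $f(m)=q\,f(m-1)+(s+1)^{m-1}$. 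You instead decode the $q^{m-1}$ columns with $\mathrm{Dec}_{\mathrm{RS}}$ first, push the pigeonhole bound one level up (at most $(t_{\mathrm{RS}}+1)^{m-1}-1$ bad columns), run the recursive decoder on only $s+1$ rows, and finish with Lagrange interpolation; this yields $f(m)=(s+1)f(m-1)+q^{m-1}$. Since $\sum_{i=0}^{m-1}a^{m-1-i}b^i$ is symmetric in $a$ and $b$, both recurrences have the same closed form, so the stated complexity is unaffected. Your version invokes the expensive recursive decoder only $s+1$ times per level instead of $q$ times (the totals coincide only because you pay $q^{m-1}$ RS calls up front) at the mild cost of an extra interpolation step, whereas the paper reads the coefficients off directly from the final RS decodings. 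One detail worth a sentence in a polished write-up: when $\mathrm{Dec}_{\mathrm{RS}}$ fails on a column with more than $t_{\mathrm{RS}}$ errors you must keep some placeholder column, and the argument still goes through because such a column contributes at most one erroneous entry to each row.
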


\begin{proof}

We denote by $\alpha_1,\ldots,\alpha_q$ all the elements of $\mathbb F_q$.
We proceed by induction on $m\in \mathbb N$. 
For $m=1$, since $\mathrm{Cube}_q(s,1) = \mathrm{RS}_q(s)$ then, there exists $\mathrm{Dec}_{\mathrm{RS}}$ that corrects up to $t_{\mathrm{RS}}$ errors.

Now assume that there exists a decoding algorithm for $\mathrm{Cube}_q(s,m-1)$ that corrects up to $(t_{\mathrm{RS}}+1)^{m-1}-1$ errors. Without loss of generality we reorder the points $\mathcal P=\{ P_1, \ldots, P_n\} = \mathbb F_q^m$ with $n=q^m$ in such a way that the first $q^{m-1}$ points of $\mathcal P$ are those that have $\alpha_1$ in their first coordinate, then those that have $\alpha_2$, and so on. Let $\mathbf v = (\mathbf v_1, \ldots , \mathbf v_q)\in \mathbb F_q^n$ where $\mathbf v_i \in \mathbb F_q^{q^{m-1}}$ be such that there exists $\mathbf u = (\mathbf u_1, \ldots, \mathbf u_q)\in \mathrm{Cube}_q(s,m)$ where $\mathbf u_i \in \mathbb F_q^{q^{m-1}}$ with $d(\mathbf v, \mathbf u)< (t_{RS}+1)^m$. As $\mathbf u \in \mathrm{Cube}_q(s,m)$, there exists
$$f(X_1, \ldots, X_m) = \sum_{i_1, \ldots, i_m = 0}^s \beta_{i_1, \ldots, i_m} X_1^{i_1} \cdots X_m^{i_m}\in \mathbb F_q[X_1, \ldots, X_m],$$ such that $\mathbf u = \mathrm{ev}_{\mathcal P}(f)$ and $\mathbf u_i = \mathrm{ev}_{\mathcal P'}(f(\alpha_i, X_2, \ldots, X_m))$ with $\mathcal P' = \mathbb F_q^{m-1}$. We are going to show how to recover $f(X_1, \ldots, X_m)$ from $\mathbf v$ by calling $q$ times the decoder $\mathrm{Dec}_{\mathrm{Cube}_q(s,m-1)}$ and $(s+1)^{m-1}$ times the decoder $\mathrm{Dec}_{\mathrm{RS}}$.

For all $i \in \{1, \ldots, q\}$, we define $d_i = d(\mathbf u_i, \mathbf v_i)$. We say that $i \in \{1, \ldots, q\}$ is GOOD if $d_i < (t_{RS}+1)^{m-1}$; otherwise we say that $i$ is BAD. Let $d_{bad}$ be the number of BAD values $i \in \{ 1, \ldots, q\}$. Since
 \[ (t_{RS}+1)^m > \sum_{i = 1}^q d_i \geq \sum_{i {\rm \ is \ BAD}} d_i \geq (t_{RS}+1)^{m-1} d_{bad},\] we have that $d_{bad} < (t_{RS}+1)$.
 
Write $f(X_1, \ldots, X_m) = \sum_{j_2, \ldots, j_m=0}^s h_{j_2, \ldots, j_m}(X_1) X_2^{j_2} \cdots X_m^{j_m}$ where the univariate polynomial $h_{j_2, \ldots, j_m}(X_1)\in \mathbb F_q[X_1]$ has degree at most $s$, for all $j_2\ldots, j_m \in \{0, \ldots, s\}$. For all $\ell \in \{1,\ldots,q\}$, consider 
\begin{eqnarray*}
g_\ell(X_2, \ldots, X_m) & := & f(\alpha_\ell,X_2, \ldots, X_m) = \sum_{j_1, \ldots, j_m = 0}^s \beta_{j_1, \ldots, j_m}  {\alpha_\ell}^{j_1} X_2^{j_2} \cdots X_m^{j_m} =\\
& = &  \sum_{j_2, \ldots, j_m = 0}^s h_{j_2, \ldots, j_m}(\alpha_\ell) X_2^{j_2}\cdots X_m^{j_m} \in \mathbb F_q[X_2, \ldots, X_m],
\end{eqnarray*}
which   satisfyies that $\deg_{X_i}(g_{\ell}) \leq s$ for all $i \in \{2, \ldots, m\}$. Moreover, whenever $\ell$ is GOOD,  we can recover $g_\ell$ by means of $\mathrm{Dec}_{\mathrm{Cube}_q(s,m-1)}$ and the values   $\mathbf v_\ell=(v_{\ell_1},\ldots,v_{\ell_{q^{m-1}}})\in \mathbb{F}_q^{q^{m-1}}$.

Thus, if $\ell$ is GOOD, we have   recovered the value $h_{j_2, \ldots, j_m}(\alpha_\ell, X_2, \ldots, X_m)$ for all $j_2, \ldots, j_m \in \{0, \ldots, s\}$. As there are at least $(q-t_{RS})$ GOOD values, then for each $j_2, \ldots, j_m \in \{0, \ldots, s\}$ we have at least $(q-t_{RS})$ correct evaluations of $h_{j_2, \ldots, j_m}(X_1).$ Thus, by induction we can recover $\beta_{j_1, \ldots, j_m}$ using $(s+1)^{m-1}$ times the decoder $\mathrm{Dec}_{\mathrm{RS}}$.

Now, let $f(m)$ be the number of times that algorithm $\mathrm{Dec}_{\mathrm{Cube}_q(s,m)}$ calls algorithm $\mathrm{Dec}_{\mathrm{RS}}.$ We will deduce a formula for $f(m)$ by induction on $m\in \mathbb N$. First notice that for $m=1$, since $\mathrm{Cube}_q(s,1) = \mathrm{RS}_q(s),$ we have that $f(1) = 1$. Moreover, from the above paragraphs, we can deduce that 
  $ f(m) = q f(m-1) + (s+1)^{m-1}f(1). $ 
Now we assume that $f(r) = q f(r-1) + (s+1)^{r-1}f(1)$ for all $r\leq m$ and we try to show the result for $f(m)$. Indeed,
\begin{eqnarray*}
f(m) & = &  q f(m-1) + (s+1)^{m-1}f(1) \\
 & = & q ((s+1)^{m-2} + q f(m-2) ) + (s+1)^{m-1}\\
 & = & (s+1)^{m-1} + q(s+1)^{m-2} + q^2 f(m-2) \\
 & = & \cdots = (s+1)^{m-1} + q(s+1)^{m-2} + q^2(s+1)^{m-3} + \cdots + q^{m-1} f(1) \\
 & = & \sum_{i=0}^{m-1} (s+1)^{m-1-i} q^{i}.
\end{eqnarray*}
This completes the proof. \qed\end{proof}

\begin{example} 
Consider $\mathcal C= \mathrm{Hyp}_q(d,2)$ with $q=32,$ and $d=225.$ See Figure \ref{example-cube} for a representation of this example. We will give different decoding algorithms for $\mathcal C$ using all ideas proposed in this article so far.

\begin{enumerate} 
\item By \cite[Proposition 3.2]{our}, $\mathcal C \subseteq \mathrm{RM}_q(s,2),$ for $s\geq 34$. Therefore, using Section \ref{sub:dec1}, and applying PW algorithm on $\mathcal  C_{R_1} = \mathrm{RM}_q(34,2)$,  we have an efficient decoding algorithm for $\mathcal C$ that \textbf{corrects up to $11$ errors} and requires just one call to $\mathrm{Dec}_{\mathcal C_{R_1} }$.

\item By \cite[Theorem 4.3]{our}, $\mathrm{RM}_q(s,2)\subseteq \mathcal C,$ for $s\leq 24$. Therefore, using Section \ref{sub:dec2} and applying PW algorithm on $\mathcal C_{R_2} = \mathrm{RM}_q(24,2)$, we have a decoding algorithm for $\mathcal C$ that \textbf{corrects up to $127$ errors}   because the minimum distance of $RM_{32}(24,2)$ is 256 and so the error-correcting capability will be 127, which is beyond the error correction capability of $\mathcal C$. The algorithm uses the decoder $\mathrm{Dec}_{\mathcal C_{R_2}}$ plus brute force. In particular it requires $q^{156}$ calls to $\mathrm{Dec}_{\mathcal C_{R_2}}$ where $156 = |H-R_2|,$ and $H$ and $R_2$ are the sets in $\mathbb N^2$ defining $\mathcal C$ and $\mathcal C_{R_2}$, respectively.

\item The intermediate proposal between the two previous options described in Section \ref{sub:dec3} has a slight advantage in this example with respect to the first option. More precisely, one should consider $\mathcal C_{R'} = \mathrm{RM}_q(33,2)$ and perform $q^{|H-R'|} = q$ calls to the PW decoder for $\mathcal C_{R'}$ to {\bf correct up to $14$ errors} because the minimum distance of $RM_{32}(33,2)$ is 30 and so the error-correcting capability will be 14.

\item By Proposition \ref{prop:highest-cube}.(b), we have $\mathcal C\subseteq \mathrm{Cube}_q(s,2)$ for $s\geq 24$. We know an efficient decoding algorithm for $\mathcal C_3 = \mathrm{Cube}_q(24,2)$ that corrects up to 
$\displaystyle t_3 = (t_{\mathrm RS} + 1)^m-1 = 15,$  
where $t_{\mathrm RS} = \left\lfloor \frac{q-s-1}{2}\right\rfloor = 3$
denotes the error correction capability of $\mathrm{RS}_q(s)$ with $s=24$. Therefore, using Theorem \ref{Prop:dec_cube}, we have an efficient decoding algorithm for $\mathcal C$ that \textbf{corrects up to $15$ errors} and requires calling $q+s+1 = 57$ times the decoder $\mathrm{Dec}_{\mathrm{RS}}$.

\end{enumerate}

\begin{figure}[ht]
\centering{  \includegraphics[scale=0.40]{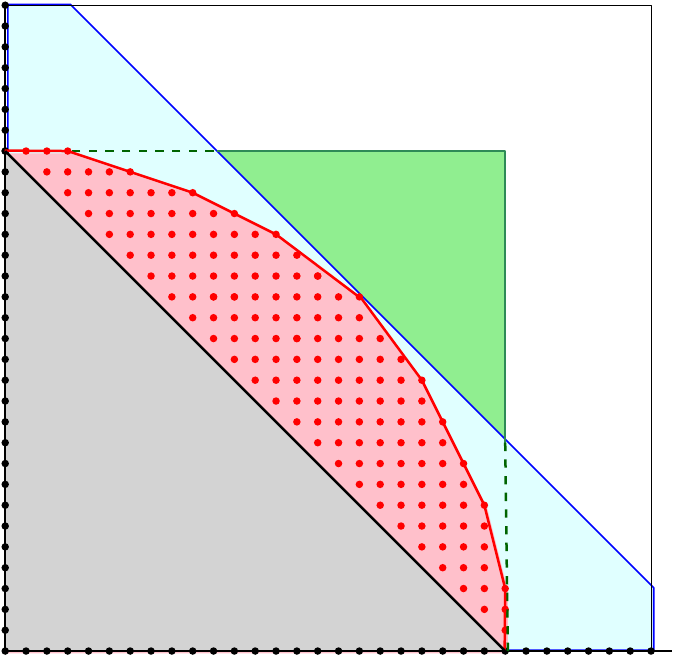}}
  \caption{Let $q=32$ and $m=2$. In this Figure the code $\mathrm{Hyp}_{q}(225,2)$ is equal to $\mathcal C_H$, $\mathrm{RM}_q(24,2)$ equals $\mathcal C_{R_1}$, the code $\mathrm{RM}_q(34,2)$ is equal to $\mathcal C_{R_2}$ and the code $\mathrm{Cube}_q(24,2)$ equals $\mathcal C_A,$ where $H$, $R_1$, $R_2$ and $A$ are the sets of lattice points below the red, the black, the blue and the green curve, respectively.}
  \label{example-cube}
  \end{figure}
\end{example}

\begin{remark}
All the ideas proposed in Section \ref{21.04.04} can be adapted to decode a hyperbolic code using a cube code. That is, by Proposition \ref{prop:highest-cube}, given a hyperbolic code $\mathrm{Hyp}$ we can find the largest (respectively smallest) cube code contained in (respectively that contains) $\mathrm{Hyp}$. And this result allow us to give decoding algorithms for hyperbolic codes in terms of the decoding algorithms of cube codes. Furthermore, an intermediate proposal between the two options can be used (as in Section \ref{sub:dec3}).
\end{remark}

\section{Decoding   based on a generalized Sudan's list decoding}\label{21.04.06}
 \begin{definition}\label{21.07.19}
    Let $\mathcal C_H=\mathrm{Hyp}_q(d,m)$ be a hyperbolic code with $H\subset\lii q-1\rii^m$. For $r\geq 1$, take $H_r\subset\lii q-1\rii^m$ such that $\mathcal C_{H_r}=\mathrm{Hyp}_q(r+1,m)$. For $i\geq 0,$ we define
$$L(d,r,i)=\{\mathbf{a}\in\lii q-1\rii^m\ |\ \mathbf{a}+iH\subseteq H_r\}.$$
\end{definition}

Observe that ${L(d,r,0)}=H_r$ for any $d$. We also have that $L(d,r,i+1)\subseteq L(d,r,i).$ Indeed, as
$L(d,r,i+1)+(i+1)H\subset H_r,$ then $L(d,r,i+1)\subset L(d,r,i+1)+H\subset L(d,r,i)$.

For the following algorithm, we assume that the numbers $r$ and $t$ satisfy the following conditions:
\begin{equation}
\sum_{i=0}^\infty \#L(d,r,i)>n \qquad \text{ and } \qquad t=\min\left\{i'\ |\ \sum_{i=0}^{i'} \#L(d,r,i)>n\right\}.
\label{eq:r,t}
\end{equation}

\noindent {\bf Notation:} Given $\mathbf{u},\mathbf{v}\in\mathbb{F}_q^n$, we define the Schur product as the component wise product on $\mathbb F_q^n$, i.e.
$(\mathbf{u}\ast\mathbf{v})_i=u_iv_i$ and $(\mathbf{u}^{\ast j})_i=u_i^j,$ for $j\geq 1$. We will write $\mathbf{u}^{\ast 0}$ for the word with all the components equal to $1$.

\noindent {\bf Initialization:} Let $\mathbf{y}\in\mathbb{F}_q^n$ be the received word, and define $r$ and $t$ according to the conditions \eqref{eq:r,t}.
\begin{enumerate}
\item[\bf Step 1] For $0\leq i\leq t,$ find $Q_i\in \mathbb F_q[L(d,r,i)]$, not all zero, such that $\sum_{i=0}^t\mathrm{ev}(Q_i)\ast\mathbf{y}^{\ast i}=0$.
\item[\bf Step 2] Factorize $Q(Y)=\sum_{i=0}^tQ_iY^i\in\mathbb{F}_q[\mathbf{X},Y]$ and detect all possible $f\in\mathbb{F}_q[\mathbf{X}]$ such that $(Y-f)|Q(Y).$  This can be done by the method of Wu \cite{Wu11}.
\item[\bf Step 3] Return $\{\mathrm{ev}(f)\in \mathcal C_H\ |\ f\text{ is a solution of Step 2}\}$.
\end{enumerate}

The list $\{\mathrm{ev}(f)\in\mathbb{F}_q[\mathbf{X}]\ |\ f\text{ is a solution of Step 2}\}$ is a list of at most $t$ elements that contains all the codewords $\mathbf{x}$ in $\mathcal C_H$ such that $d_H(\mathbf{x},\mathbf{y})\leq r$.

For completion, we write the proof of the last algorithm adapting it to the notation proposed in the previous lines.

\begin{theorem}{\bf (\cite[Theorem 4]{GM07})}
    The last algorithm gives the claimed output.
\end{theorem}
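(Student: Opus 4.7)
The plan is to verify three things: that Step 1 admits a nontrivial solution $Q$, that any codeword $\mathrm{ev}(f)\in\mathcal{C}_H$ within Hamming distance $r$ of $\mathbf{y}$ yields a linear factor $(Y-f)$ of $Q(Y)$, and that the output list has at most $t$ elements. The first point is linear algebra: the equation $\sum_{i=0}^{t}\mathrm{ev}(Q_i)\ast\mathbf{y}^{\ast i}=0$ is a homogeneous system of $n$ scalar equations over $\mathbb{F}_q$ in the coefficients of $Q_0,\ldots,Q_t$, and the total number of unknowns $\sum_{i=0}^{t}\#L(d,r,i)$ exceeds $n$ by condition \eqref{eq:r,t}, so a nonzero solution exists.

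The heart of the argument is the divisibility claim. Fix $f\in\mathbb{F}_q[H]$ with $d_H(\mathrm{ev}(f),\mathbf{y})\leq r$. I first show that $Q(f)=\sum_{i=0}^{t}Q_if^i\in\mathbb{F}_q[H_r]$: since $Q_i\in\mathbb{F}_q[L(d,r,i)]$ and $f^i$ has monomial support in the $i$-fold Minkowski sum $iH$, every monomial of $Q_if^i$ lies in $L(d,r,i)+iH\subseteq H_r$ by the definition of $L(d,r,i)$. Next, at every non-error position $P_j$ (i.e.\ one for which $y_j=f(P_j)$) I compute
\[
\mathrm{ev}(Q(f))_j=\sum_{i=0}^{t}Q_i(P_j)f(P_j)^i=\sum_{i=0}^{t}Q_i(P_j)y_j^i=0
\]
by the equation of Step 1. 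Since there are at least $n-r$ such positions, $\mathrm{w}_H(\mathrm{ev}(Q(f)))\leq r$. But $\mathrm{ev}(Q(f))\in\mathcal{C}_{H_r}=\mathrm{Hyp}_q(r+1,m)$, whose minimum distance is $r+1$; hence $\mathrm{ev}(Q(f))=0$. Because the evaluation map $\mathrm{ev}_{\mathcal{P}}\colon\mathbb{F}_q[\lii 0,q-1\rii^m]\to\mathbb{F}_q^n$ is a bijection and $Q(f)\in\mathbb{F}_q[H_r]\subseteq\mathbb{F}_q[\lii 0,q-1\rii^m]$, this forces $Q(f)=0$ as a polynomial, so $(Y-f)\mid Q(Y)$.

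Finally, $Q(Y)\in\mathbb{F}_q[\mathbf{X}][Y]$ has $Y$-degree at most $t$ and therefore admits at most $t$ distinct linear factors of the form $Y-f$, so the output list has size at most $t$. The main obstacle is the first half of the divisibility step: one must carefully exploit the Minkowski-sum containment $L(d,r,i)+iH\subseteq H_r$ to confine the support of $Q(f)$ to $H_r$, and then invoke injectivity of the evaluation map on $\mathbb{F}_q[\lii 0,q-1\rii^m]$ to pass from vanishing of $\mathrm{ev}(Q(f))$ to vanishing of $Q(f)$ itself; everything else reduces either to a dimension count or to the Reed--Solomon-type bound on the number of roots of a polynomial in $Y$ over the integral domain $\mathbb{F}_q[\mathbf{X}]$.
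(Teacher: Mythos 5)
Your proof is correct and follows essentially the same route as the paper's: a dimension count for the existence of a nonzero $Q$, the Minkowski-sum containment $L(d,r,i)+iH\subseteq H_r$ to place $\mathrm{ev}(Q(f))$ in $\mathcal C_{H_r}$, and the comparison of $\mathrm{w}_H(\mathrm{ev}(Q(f)))\leq r$ against the minimum distance $r+1$ to force $Q(f)=0$. You are slightly more explicit than the paper in two places --- invoking injectivity of $\mathrm{ev}_{\mathcal P}$ on $\mathbb F_q[\lii 0,q-1\rii^m]$ to pass from $\mathrm{ev}(Q(f))=0$ to $Q(f)=0$, and bounding the list size by the $Y$-degree of $Q$ --- but these are refinements of the same argument, not a different approach.
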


\begin{proof}
Since $\sum_{i=0}^t \#L(d,r,i)>n$, then the equation $\sum_{i=0}^t\mathrm{ev}(Q_i)\ast\mathbf{y}^{\ast i}=0$ has more indeterminates than equations and then a non-zero solution exists.
    Now, suppose that there exists $\mathbf{x}=\mathrm{ev}(f)\in \mathcal C_H$ such that $d_H(\mathbf{y},\mathbf{x})\leq r$. Since $f\in H$, then all the monomials in the support of $f^i$ are in $iH.$ Then, all the monomials $X^\mathbf{a}$ appearing in $Q_if^i$ satisfies $\mathbf{a}\in H_r$ by definition of $L(d,r,i)$. This implies that $\mathrm{ev}(Q(f))\in \mathcal C_{H_r}$ and 
    \begin{equation}
    \label{eq:circ}
    \mathrm{w}_H(\mathrm{ev}(Q(f)))\geq r+1\ \text{ or }\  Q(f)=0. 
    \end{equation}
    
    On the other hand, since $d_H(\mathbf{y},\mathbf{x})\leq r$, we have that $\sum_{i=0}^t\mathrm{ev}(Q_i)\ast\mathbf{y}^{\ast i}=0$ and $\mathrm{ev}(Q(f))$ can differ in at most $r$ distinct positions. Thus 
     $\mathrm{w}_H(\mathrm{ev}(Q(f)))\leq r.$ 
By Equation \eqref{eq:circ} we conclude that $Q(f)=0$, which means $(Y-f)|Q(Y)$. 
\qed
\end{proof}

To use the previous algorithm and to know the number of errors that we can correct, we just need to compute $L(d,r,i)$ along with their sizes. In general, it is not clear what is the form of $L(d,r,i),$ but with the following results we can estimate their sizes, which is one of the main contributions of this section.

\begin{proposition}
Let $\mathcal C_A=\mathrm{Hyp}_q(d_A,m)$ and $\mathcal C_B=\mathrm{Hyp}_q(d_B,m)$. Then
  $\mathcal C_{A+B}\subset\mathrm{Hyp}_q(d_A+d_B-q^m,m).$  
\end{proposition}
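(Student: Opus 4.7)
The plan is to reduce the set-theoretic inclusion to a purely numerical inequality and then establish that inequality by induction on $m$. Take an arbitrary $\mathbf{c}\in A+B$, say $\mathbf{c}=\mathbf{a}+\mathbf{b}$ with $\mathbf{a}\in A$, $\mathbf{b}\in B$, and assume the sum lies in $\lii 0,q-1\rii^m$ (which is the case relevant to the Sudan-style algorithm, where $\mathbf{a}+iH\subseteq H_r\subseteq\lii 0,q-1\rii^m$). Setting $\alpha_j=q-a_j$, $\beta_j=q-b_j$ and $\gamma_j=q-(a_j+b_j)=\alpha_j+\beta_j-q$, the hypothesis becomes $\prod_j\alpha_j\geq d_A$ and $\prod_j\beta_j\geq d_B$, while the desired conclusion is $\prod_j\gamma_j\geq d_A+d_B-q^m$. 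So the entire statement boils down to the inequality
\[
\prod_{j=1}^{m}\gamma_j \ \geq \ \prod_{j=1}^{m}\alpha_j+\prod_{j=1}^{m}\beta_j-q^{m},
\]
valid whenever $\alpha_j,\beta_j\in\lii 1,q\rii$ with $\alpha_j+\beta_j\geq q$.

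For this inequality I would induct on $m$. The base $m=1$ is the identity $\gamma_1=\alpha_1+\beta_1-q$. For the inductive step, multiply the inductive hypothesis $\prod_{j<m}\gamma_j\geq\prod_{j<m}\alpha_j+\prod_{j<m}\beta_j-q^{m-1}$ by $\gamma_m\geq 0$ and compare with the target. Using the identities $\gamma_m-\alpha_m=-b_m$, $\gamma_m-\beta_m=-a_m$ and $q-\gamma_m=a_m+b_m$, the slack between the two sides simplifies algebraically to
\[
a_m\Bigl(q^{m-1}-\prod_{j<m}\beta_j\Bigr)+b_m\Bigl(q^{m-1}-\prod_{j<m}\alpha_j\Bigr),
\]
which is manifestly nonnegative because each $\alpha_j,\beta_j\leq q$ forces both sub-products to be at most $q^{m-1}$, while $a_m,b_m\geq 0$. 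This closes the induction, and the proposition follows by combining the inequality with $\prod_j\alpha_j\geq d_A$ and $\prod_j\beta_j\geq d_B$, which shows that $\mathbf{c}$ satisfies the hyperbolic condition at order $d_A+d_B-q^m$.

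The principal obstacle is the bookkeeping in the inductive step: a crude ``multiply-through-and-drop-terms'' argument fails because the quantity $\prod_j\alpha_j+\prod_j\beta_j-q^{m}$ can itself be negative, so the inductive hypothesis cannot be compared against the target by any direct monotonicity. The key trick is the algebraic identity $q-\gamma_m=a_m+b_m$, which is precisely what converts the residual into a sum of two nonnegative terms. A secondary subtlety worth noting is the convention when some $a_j+b_j\geq q$: in that case $\mathbf{c}\notin\lii 0,q-1\rii^m$ and the monomial $\mathbf{X}^{\mathbf{c}}$ must be reduced modulo the relations $X_j^q-X_j$ on $\mathcal{P}=\mathbb{F}_q^m$, so one should interpret $\mathcal{C}_{A+B}$ via the evaluation map rather than as an abstract set; however, since in the intended applications $A+B$ always stays inside $\lii 0,q-1\rii^m$, it suffices to handle the case above.
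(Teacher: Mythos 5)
Your proof is correct, but it takes a genuinely different route from the paper's. The paper argues through codewords: for $f\in\mathbb F_q[A]$ and $g\in\mathbb F_q[B]$ it bounds $|Z(fg)|\leq|Z(f)|+|Z(g)|\leq 2q^m-d_A-d_B$ using the known minimum distances of the two hyperbolic codes, deduces $\mathrm{w}_H(\mathrm{ev}(f)\ast\mathrm{ev}(g))\geq d_A+d_B-q^m$, and then uses the fact that the minimum distance of a monomial code (with decreasing defining set) is the minimum of the footprints $\prod_j(q-c_j)$ of its monomials to convert that weight bound into the combinatorial inclusion $A+B\subseteq H'$. You instead prove the underlying arithmetic inequality $\prod_j(\alpha_j+\beta_j-q)\geq\prod_j\alpha_j+\prod_j\beta_j-q^m$ directly by induction on $m$; I checked the inductive step, and the slack does simplify to $a_m\bigl(q^{m-1}-\prod_{j<m}\beta_j\bigr)+b_m\bigl(q^{m-1}-\prod_{j<m}\alpha_j\bigr)\geq 0$ as you claim. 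Your argument is more elementary and self-contained -- it needs neither the minimum distance of hyperbolic codes nor the footprint characterization -- and it produces the set inclusion $A+B\subseteq H'$ directly, which is what the subsequent corollaries on $L(d,r,i)$ actually consume; the paper's argument is shorter and extends verbatim to Schur products of several codes. One improvement: the overflow case you defer to ``intended applications'' can be closed outright. If some pair satisfies $a_j+b_j\geq q$, then $\alpha_j+\beta_j\leq q$ together with $d_A\leq\alpha_j q^{m-1}$ and $d_B\leq\beta_j q^{m-1}$ forces $d_A+d_B\leq q^m$, so the target order $d_A+d_B-q^m$ is nonpositive and the claimed inclusion is vacuous; hence whenever the statement has content, $A+B$ automatically stays inside $\lii 0,q-1\rii^m$ and no reduction modulo $X_j^q-X_j$ is ever required.
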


\begin{proof}
Take $f \in \mathbb F_q[A]$ and $g \in \mathbb F_q[B]$. Denote the set of zeros of $f$ (resp. g) in $\mathbb{F}_q^m$ by $Z(f)$ (resp. $Z(g)$). We know that $|Z(f)|\leq q^m-d_A$ and $|Z(g)|\leq q^m-d_B.$ This implies that
$|Z(fg)|\leq |Z(f)|+|Z(g)|\leq 2q^m-d_A-d_B.$ In other words, for any $fg\in \mathcal C_{A+B},$ we have that
$\mathrm{w}_H(\mathrm{ev}(f)\ast\mathrm{ev}(g))\geq d_A+d_B-q^m.$ Then
 $\delta(\mathcal C_{A+B})\geq d_A+d_B-q^m.$
As $\mathcal C_{A+B}$ is a monomial code, its minimum distance is the minimum of the footprints of its defining monomials. Thus we obtain the conclusion.
\qed
\end{proof}

With the above result we can bound the size of the set $L(d,r,1)$ and so, we can bound the number of errors that we can correct with the proposed algorithm when we adapt it to unique decoding.

\begin{corollary}
    Let $\mathcal C=\mathrm{Hyp}_q(d,m)$ and $d\geq r\geq 1$. If $\mathcal C_{H_1}=\mathrm{Hyp}_q(q^m+r-d+1,m)$, then
     $H_1\subseteq L(d,r,1).$ 
\end{corollary}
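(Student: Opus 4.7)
The plan is to apply the preceding Proposition with $A=H_1$ (so $d_A=q^m+r-d+1$) and $B=H$ (so $d_B=d$). The arithmetic is immediate: $d_A+d_B-q^m=r+1$, so the Proposition delivers $\mathcal C_{H_1+H}\subseteq\mathrm{Hyp}_q(r+1,m)=\mathcal C_{H_r}$. Because both sides are monomial codes whose defining exponent sets lie in $\lii 0,q-1\rii^m$ and the evaluations of distinct monomials in this box are linearly independent in $\mathbb F_q^{q^m}$, this code inclusion translates into a set inclusion: any $\mathbf c\in H_1+H$ all of whose components are $\leq q-1$ must belong to $H_r$.

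The main obstacle is therefore to check that the naive coordinatewise sum $\mathbf a+\mathbf b$ never escapes $\lii 0,q-1\rii^m$; otherwise the reduction via $X_j^q=X_j$ intervenes, the inclusion only concerns the reduced exponent rather than $\mathbf a+\mathbf b$ itself, and the definition of $L(d,r,1)$ uses the naive sum. To verify the bound, fix any index $j_0$ and use that each factor $q-a_j$ and $q-b_j$ is at most $q$. From $\prod_j(q-a_j)\geq q^m+r-d+1$ one concludes $q-a_{j_0}\geq\lceil(q^m+r-d+1)/q^{m-1}\rceil$, and similarly $q-b_{j_0}\geq\lceil d/q^{m-1}\rceil$. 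Adding these two inequalities and using $\lceil x\rceil+\lceil y\rceil\geq\lceil x+y\rceil$ together with the fact that $q$ is an integer gives
\[(q-a_{j_0})+(q-b_{j_0})\geq q+\left\lceil\frac{r+1}{q^{m-1}}\right\rceil\geq q+1,\]
so $a_{j_0}+b_{j_0}\leq q-1$, as required.

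Combining the two ingredients, for every $\mathbf a\in H_1$ and $\mathbf b\in H$ the sum $\mathbf a+\mathbf b$ lies in $\lii 0,q-1\rii^m$, and $\mathrm{ev}(X^{\mathbf a+\mathbf b})\in\mathcal C_{H_1+H}\subseteq\mathcal C_{H_r}$ forces $\mathbf a+\mathbf b\in H_r$. Therefore $\mathbf a+H\subseteq H_r$, i.e.\ $\mathbf a\in L(d,r,1)$, and we conclude $H_1\subseteq L(d,r,1)$.
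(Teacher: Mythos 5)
Your proof is correct and follows the route the paper intends: the corollary is stated without proof precisely because it is the preceding proposition applied with $d_A=q^m+r-d+1$ and $d_B=d$, giving $d_A+d_B-q^m=r+1$ and hence $\mathcal C_{H_1+H}\subseteq \mathcal C_{H_r}$, which by linear independence of the monomials in $\lii 0,q-1\rii^m$ yields $H_1+H\subseteq H_r$, i.e.\ $H_1\subseteq L(d,r,1)$. Your additional check that each coordinatewise sum $a_{j_0}+b_{j_0}\leq q-1$ (so no reduction via $X_j^q=X_j$ interferes with the set-level conclusion) is a genuine detail the paper leaves implicit, and your ceiling argument for it is sound.
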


The number of errors that we can uniquely decode with the proposed algorithm is given by an easy-to-check formula.

\begin{corollary}
    If $\# H_r+\#H_1> n$, then the algorithm can correct up to $r$ errors solving a linear equation in $(\mathbb{F}_q[\mathbf{X}])[Y].$
\end{corollary}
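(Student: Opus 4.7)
The plan is to deduce this corollary directly from the main theorem of the section, by showing that the hypothesis forces the parameter $t$ from condition \eqref{eq:r,t} to equal $1$, which in turn makes both Step~1 and Step~2 of the algorithm linear operations.

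First I would use the previous corollary to get $H_1 \subseteq L(d,r,1)$, which yields $\#L(d,r,1) \geq \#H_1$. Combining this with the equality $\#L(d,r,0) = \#H_r$ (already observed just after Definition~\ref{21.07.19}), the hypothesis $\#H_r + \#H_1 > n$ gives
\[
\sum_{i=0}^{1} \#L(d,r,i) \;\geq\; \#H_r + \#H_1 \;>\; n.
\]
Thus the minimum $t$ satisfying \eqref{eq:r,t} is at most $1$. On the other hand, since $H_r \subseteq \lii 0,q-1 \rii^m$, we have $\#L(d,r,0)=\#H_r \leq q^m = n$, so $t\geq 1$. Hence $t=1$.

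Next I would unpack what the algorithm does when $t=1$. Step~1 reduces to finding $Q_0 \in \mathbb{F}_q[L(d,r,0)]$ and $Q_1 \in \mathbb{F}_q[L(d,r,1)]$, not both zero, satisfying
\[
\mathrm{ev}(Q_0) + \mathrm{ev}(Q_1)\ast \mathbf{y} \;=\; 0,
\]
which is a homogeneous $\mathbb{F}_q$-linear system in the $\#L(d,r,0)+\#L(d,r,1) > n$ coefficients of $Q_0$ and $Q_1$, with $n$ equations. The strict inequality guarantees a nonzero solution. Moreover, the polynomial $Q(Y)=Q_0+Q_1 Y \in (\mathbb{F}_q[\mathbf{X}])[Y]$ is itself linear in $Y$, so Step~2 (factorization) amounts to reading off $f = -Q_0/Q_1$ whenever this quotient lies in $\mathbb{F}_q[\mathbf{X}]$; no heavy factorization machinery (such as Wu's method) is needed.

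Finally, correctness of the output and the $r$-error bound are inherited verbatim from the preceding theorem, applied to the chosen $r$ and $t=1$. The only subtle point I would double-check is the degenerate possibility $Q_1 = 0$: in that case $Q(Y) = Q_0$ has no root in $\mathbb{F}_q[\mathbf{X}]$, which simply means the list returned is empty; but the theorem's argument still applies, because any $f$ corresponding to a codeword within distance $r$ of $\mathbf{y}$ must satisfy $Q(f)=0$, forcing $Q_0 = 0$ and contradicting the nonzero choice. Thus the only real content beyond the main theorem is the bookkeeping that $t=1$, and this is exactly what the hypothesis $\#H_r+\#H_1>n$ buys us.
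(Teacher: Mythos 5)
Your proof is correct and follows exactly the route the paper intends (the paper states this corollary without proof, as an immediate consequence of the preceding results): using $L(d,r,0)=H_r$ and $H_1\subseteq L(d,r,1)$, the hypothesis $\#H_r+\#H_1>n$ forces $t=1$ in \eqref{eq:r,t}, so $Q(Y)=Q_0+Q_1Y$ is linear in $Y$ and the main theorem yields correctness for up to $r$ errors. Your treatment of the degenerate case $Q_1=0$ is a nice extra detail that the paper leaves implicit.
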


If $m=2$ we can do even better. In the case of two variables, we can not only bound the size of the set $L(d,r,1)$ but we can also know exactly what monomial code is associated with such subset.
\begin{proposition}
\label{sec5:L1}
    Let $\mathcal C_H=\mathrm{Hyp}_q(d,2)$, with $d>q$. Take $a=\left \lfloor q-\frac{d}{q}\right \rfloor$ and $b=q-\frac{d}{q-a}$. For $r<a-b+1$, we have
    $\mathcal C_{L(d,r,1)}=\mathrm{Cube}_q(q-1-a,2).$ 
\end{proposition}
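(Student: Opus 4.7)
The plan is to establish the set equality $L(d,r,1) = \lii 0, q-1-a\rii^2$, from which $\mathcal C_{L(d,r,1)} = \mathrm{Cube}_q(q-1-a,2)$ follows at once from the definition of a cube code.

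For the inclusion $L(d,r,1) \subseteq \lii 0, q-1-a\rii^2$, I would exploit that both $(a,0)$ and $(0,a)$ lie in $H$; this follows directly from $a = \lfloor q-d/q\rfloor$, which gives $(q-a)q \geq d$. So if $(a_1,a_2) \in L(d,r,1)$, the translates $(a_1+a, a_2)$ and $(a_1, a_2+a)$ belong to $H_r \subseteq \lii 0, q-1\rii^2$, forcing $a_j \leq q-1-a$ for $j=1,2$.

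For the reverse inclusion, I would fix $(a_1, a_2) \in \lii 0, q-1-a\rii^2$ and $(i_1,i_2) \in H$. Since $H \subseteq \lii 0,a\rii^2$, the sum $(a_1+i_1, a_2+i_2)$ automatically lies in $\lii 0, q-1\rii^2$, so the only thing to check is $(q-a_1-i_1)(q-a_2-i_2) \geq r+1$. Each factor is decreasing in the corresponding $a_j$, so the worst case occurs at $a_1 = a_2 = q-1-a$. Setting $c := q-1-a$, $u := a+1-i_1$, $v := a+1-i_2$, the problem reduces to: show that every pair of positive integers $u,v$ with $(c+u)(c+v) \geq d$ satisfies $uv \geq r+1$.

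This last discrete minimization is the main technical step. I would analyse the continuous surrogate $f(u) = u\bigl(d/(c+u)-c\bigr)$ on the hyperbola $(c+u)(c+v)=d$; a short derivative computation shows $f$ is unimodal with maximum at $u = \sqrt d - c$. Using the identity $d = (q-a)(q-b) = (c+1)(c+v_1)$, where $v_1 := a-b+1$, one verifies that $1 \leq \sqrt d - c \leq v_1$ and that $f(1) = f(v_1) = v_1$, hence $f(u) \geq a-b+1$ throughout $[1, v_1]$. Consequently, for any integer $u$ in this range the minimal feasible $v$ yields $uv \geq a-b+1$; for integer $u \geq \lceil v_1\rceil$ the choice $v=1$ is already feasible and gives $uv = u \geq \lceil v_1\rceil$. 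Since $uv$ is an integer, in every case $uv \geq \lceil a-b+1\rceil$, and the hypothesis $r < a-b+1$ rephrases (for integer $r$) as $r+1 \leq \lceil a-b+1\rceil$. The main subtlety will be the careful continuous-to-discrete passage and verifying that the critical point is interior to $[1,v_1]$, which relies on $a \geq b$; the hypothesis $d > q$ together with the existence of a valid integer $r \geq 1$ satisfying $r < a-b+1$ makes this automatic.
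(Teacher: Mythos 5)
Your proposal is correct, and its skeleton coincides with the paper's: both reduce the proposition to the set identity $L(d,r,1)=\lii 0,q-1-a\rii^2$, and both get the inclusion $L(d,r,1)\subseteq \lii 0,q-1-a\rii^2$ from the fact that $(a,0),(0,a)\in H$ force any admissible translate to stay inside $\lii 0,q-1\rii^2$. Where you genuinely diverge is in the hard inclusion. The paper first proves $(c,c)+H\subseteq H_r$ with $c=q-1-a$ by splitting $H$ along the line $i_1+i_2=a+b$: points below it are handled by a convexity argument (implicitly combined with the fact that $H_r$ is a down-set), and points above it by expanding $(q-i_1-c)(q-i_2-c)=(q-i_1)(q-i_2)+c(i_1+i_2-2q)+c^2$ and comparing with the boundary value $(q-a-c)(q-b-c)=a-b+1$. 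You instead reduce, via monotonicity in $(a_1,a_2)$, to minimizing $uv$ over positive integers with $(c+u)(c+v)\geq d$, and analyze the univariate function $f(u)=u\bigl(d/(c+u)-c\bigr)$, whose unimodality and the endpoint identity $f(1)=f(v_1)=v_1=a-b+1$ (coming from $d=(c+1)(c+v_1)$) give the bound uniformly, with the tail $u>v_1$ disposed of by $uv\geq u>r$. Your route avoids the case split and the somewhat terse convexity claim at the cost of a calculus argument and the continuous-to-discrete bookkeeping you already flag; the paper's route is purely algebraic. Both correctly use that $c\geq 1$ (which follows from $d>q$) and that a meaningful $r\geq 1$ forces $a>b$, so I see no gap in either.
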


\begin{proof}
    Take $c=q-1-a.$ Observe that $(0,a)\in H$ but for any $a< i_2\in\mathbb{Z}$, $(0,i_2)\notin H$. Similarly, $(q-a)(q-b)\geq d$ but for any $i_1 > b$, $(q-a)(q-i_1)<d$. As $$(q-c-a)(q-c-b)=a-b+1>r,$$ and since $\{(i_1,i_2)\ |\ i_1+i_2\leq a+b\}$ and $H_r$ are both convex sets, then for any $(i_1,i_2)\in H$ such that $i_1+i_2\leq a+b$, we have $(i_1+c,i_2+c)\in H_r$. 
    
Now, suppose that $(i_1,i_2)\in H$ but $i_1+i_2>a+b$. Then we have
    \begin{align*}
        (q-i_1-c)(q-i_2-c)&=(q-i_1)(q-i_2)+c(i_1+i_2-2q)+c^2\\
        &\geq (q-a)(q-b)+c(i_1+i_2-2q)+c^2\\
        &> (q-a)(q-b)+c(a+b-2q)+c^2\\
        &=(q-a-c)(q-b-c)\\
        &=a-b+1\\
        &>r.
    \end{align*}
This means that $(i_1+c,i_2+c)\in H_r$. Then we have 
 $c+H\subset H_r.$
    By Definition~\ref{21.07.19}, we can easily see that if $(i_1,i_2)$ satisfies the property that $i_1,i_2\leq c,$ then $(i_1,i_2)\in L(d,r,1)$.
    
    Finally, we can assure $L(d,r,1)=\{(i_1,i_2)\ |\ i_1,i_2 \leq c\}$, otherwise it would exist some $(i_1,i_2)\in L(d,r,1)$ with $i_1>c$ such that $i_1+a>q-1$ and $(i_1+c,i_2+c)\in H_r,$ which is a contradiction.
    \qed
\end{proof}

The last result can be generalized for all the sets $L(d,r,i)$.

\begin{corollary}
    Let $\mathcal C_H=\mathrm{Hyp}_q(d,2)$, $d>q$, and $a,$ $b$ and $r$ as before. Then
    $\mathcal C_{L(d,r,i)}=\mathrm{Cube}_q(q-1-ia,2)$.
\end{corollary}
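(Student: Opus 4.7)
My plan is to prove the sharper statement $L(d,r,i)=\lii 0,q-1-ia\rii^2$ by induction on $i\geq 1$ (with the convention that this set is empty, and the associated cube code is the zero code, whenever $q-1-ia<0$); this immediately yields $\mathcal C_{L(d,r,i)}=\mathrm{Cube}_q(q-1-ia,2)$. The base case $i=1$ is exactly Proposition~\ref{sec5:L1}.

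For the inductive step I would exploit the Minkowski-sum decomposition $(i+1)H=H+iH$. Unwinding the definition, $\mathbf{a}\in L(d,r,i+1)$ is equivalent to $\mathbf{a}+\mathbf{h}+iH\subseteq H_r$ for every $\mathbf{h}\in H$, which in turn is equivalent to $\mathbf{a}+\mathbf{h}\in L(d,r,i)$ for every $\mathbf{h}\in H$. Hence
\[
L(d,r,i+1)=\{\mathbf{a}\in\lii 0,q-1\rii^2\mid \mathbf{a}+H\subseteq L(d,r,i)\}.
\]

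Applying the induction hypothesis, $L(d,r,i)$ is the product of intervals $\lii 0,q-1-ia\rii^2$, so the containment $\mathbf{a}+H\subseteq L(d,r,i)$ reduces to the coordinatewise bounds $a_j+\max\{h_j\mid (h_1,h_2)\in H\}\leq q-1-ia$ for $j=1,2$. From $a=\lfloor q-d/q\rfloor$ one readily checks that $(a,0)\in H$ while $(a+1,0)\notin H$, and symmetrically on the second axis, so both maxima equal $a$. The condition therefore collapses to $a_1,a_2\leq q-1-(i+1)a$, yielding $L(d,r,i+1)=\lii 0,q-1-(i+1)a\rii^2$ and closing the induction.

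The main delicate point I anticipate is the treatment of degenerate cases: once $q-1-ia<0$ the set $L(d,r,i)$ is empty, and one must verify that the inductive step still goes through vacuously (both sides of the identity then define the zero code). A secondary remark is that the hypothesis $r<a-b+1$ from Proposition~\ref{sec5:L1} is used only to anchor the base case as a cube; the inductive step depends only on the product-of-intervals shape of $L(d,r,i)$ and requires no additional constraint on $r$.
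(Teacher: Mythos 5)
Your proposal is correct and follows essentially the same route as the paper's proof: induction on $i$ anchored at Proposition~\ref{sec5:L1}, reduction of the step to the containment $\mathbf{a}+H\subseteq L(d,r,i)$ via the Minkowski decomposition of $(i+1)H$, sufficiency from $H\subseteq\lii 0,a\rii^2$, and necessity from $(a,0)\in H$ (and its symmetric counterpart). Your explicit handling of the degenerate case $q-1-ia<0$ and the clean set identity for $L(d,r,i+1)$ are minor refinements of the same argument.
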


\begin{proof}
We know the case $i=1.$ Assume the result is true for $i\in\mathbb{N}$. As
$L(d,r,i+1)+H\subseteq L(d,r,i)=\mathrm{Cube}_q(q-1-ia,2)$ and $\mathrm{Hyp}_q(d,2)\subseteq\mathrm{Cube}_q(a,2)$ \cite[Theorem 4.3]{our}, then we have that for any $(i_1,i_2)\in H,$ $i_1,i_2 \leq a.$ This implies that
$$(q-1-(i+1)a+i_1,q-1-(i+1)a+i_2)\leq (q-1-ia,q-1-ia).$$
We conclude that $\{(i_1,i_2)\ |\ i_1,i_2\leq q-1-(i+1)a\}\subseteq L(d,r,i+1)$. The equality follows from the fact that $(a,0)$ is a point in $H$.
\qed
\end{proof}
\begin{remark}\rm
\label{sec5:rem}
Using the previous results, for $m = 2$ we have the following bound for the number of errors that our algorithm uniquely corrects. Take $t=\left \lfloor\frac{q-1}{a}\right \rfloor$ and $r<a-b+1,$ with $a$ and $b$ as before for $\mathcal C_H=\mathrm{Hyp}_q(d,2)$. If
$\#H_r+\sum_{i=1}^t(q-1-ia)^2>n,$ then the algorithm can correct up to $r$ errors. 
\end{remark}

\begin{example}
Consider $\mathcal C= \mathrm{Hyp}_q(d,2)$ with $q=16$ and $d=81$. See Figure \ref{example-section5} for a representation of this example. Take $r=8.$ Brute force computation on Geil and Matsumoto's algorithm gives that:
\begin{itemize}
\item $C_{L(d,r,0)} = \mathrm{Hyp}_q(r+1,2)$, which coincide with Definition \ref{21.07.19}.
\item Moreover, $C_{L(d,r,1)} = \mathrm{Cube}_q(5,2)$, which matches with Proposition \ref{sec5:L1} since
$$\mathrm{Cube}_q(5,2) = \mathrm{Cube}_q(q-1-a,2) \hbox{ with }a= \left\lfloor q-\frac{d}{q}\right\rfloor = 10.$$
\item Finally, $C_{L(d,r,2)} = \{0\}$.
\end{itemize}
Moreover, following Remark \ref{sec5:rem}, $r=8$ is the maximum number of errors that we can correct with this algorithm since 
$$r< a-b +1, \hbox{ where } a= \left\lfloor q-\frac{d}{q}\right\rfloor = 10 \quad \hbox{ and } \quad b=q-\frac{d}{q-a} = 2.5.$$
Observe that $\#H_r+\sum_{i=1}^t(q-1-ia)^2 = \#L(d,r,0) + \#L(d,r,1)>n = 256,$  where $t=\left \lfloor\frac{q-1}{a}\right \rfloor =1$ and $L(d,r,0)$ and $L(d,r,1)$ are the set of lattice points below the green and the black curve of Figure \ref{example-section5}, respectively. The sets $L(d,r,0)$ and $L(d,r,1)$ are, in general, difficult to describe. Using the results of this section we explicitly describe these sets when $m=2,$ and we provide  a lower bound on their sizes for the cases when $m>2$.
\begin{figure}[ht]
  \centering{\includegraphics[scale=0.25]{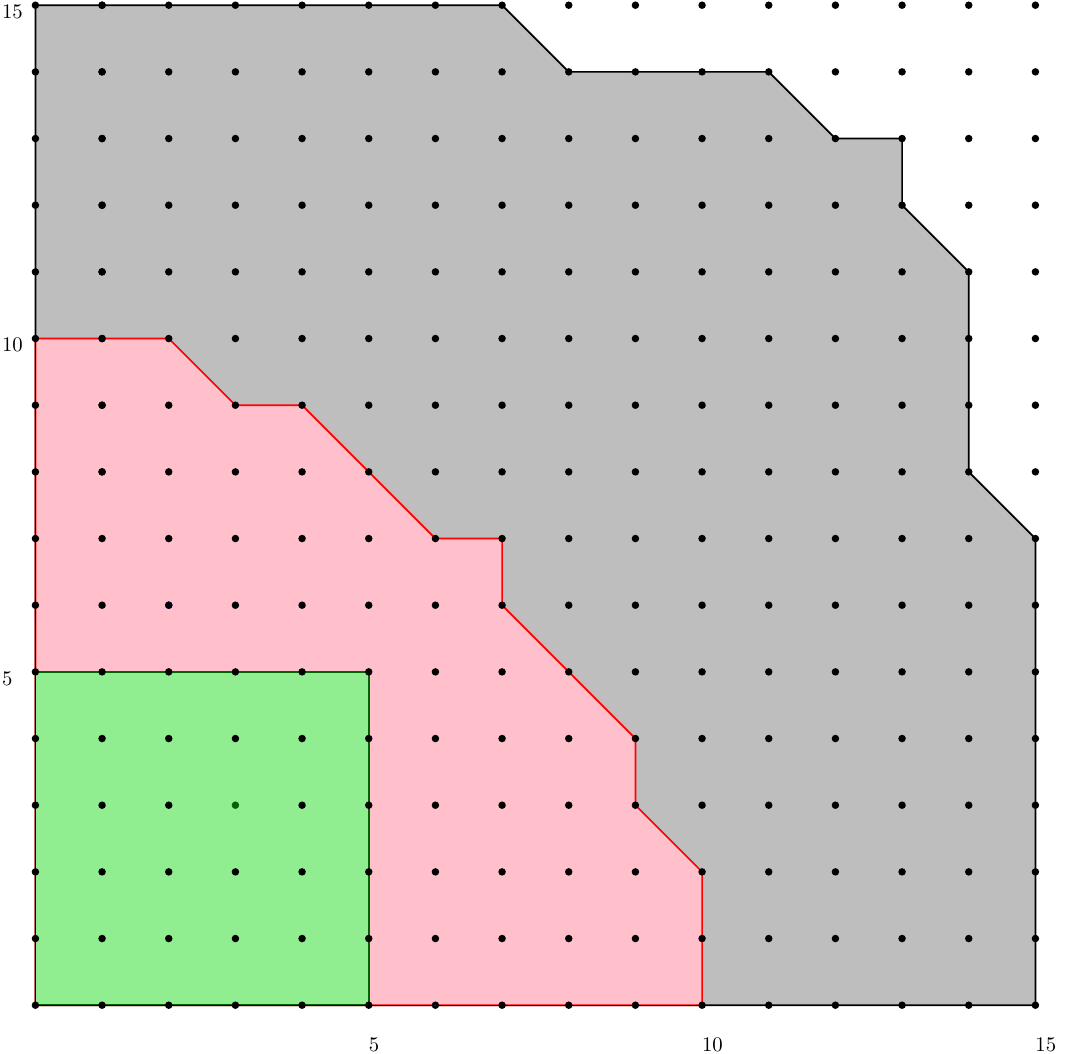}}
  \caption{Let $q=16$ and $m=2$. In this Figure the code $\mathcal C= \mathrm{Hyp}_{q}(81,2)$ is equal to $\mathcal C_H$, $\mathcal C_{L(81,8,0)} = \mathrm{Hyp}_q(9,2)$ is equal to $\mathcal C_{L_0}$ and $\mathcal C_{L(81,8,1)} = \mathrm{Cube}_q(5,2)$ is equal to $\mathcal C_{L_1}$ where $H$, $L_0$ and $L_1$ are the sets of lattice points below the red, the black and the green curve, respectively.}
  \label{example-section5}
  \end{figure}
\end{example}

\section{Comparisons and Conclusions}\label{sec:con}
 Table~\ref{tabla-final} compares the performance of the five decoding algorithms proposed in this paper for the hyperbolic code $\mathcal C = \mathrm{Hyp}_{q}(d,m),$ where $q=32, m=2$ and $d$ takes different values. Table~\ref{tabla-final} is composed by $6$ blocks, one for each value of $d.$ Each block contains 5 lines, which represent the following:
\begin{itemize}
\item First line refers to the algorithm of Section \ref{sub:dec1}. Here, we compute the smallest integer $s$ such that $\mathcal C  \subseteq  \mathrm{RM}_q(s,m)$. Then, we use   Pellikaan-Wu list-decoding algorithm for Reed-Muller codes to decode $\mathcal C$.

\item Second line refers to the algorithm of Section \ref{sub:dec2}. Here, we compute the largest integer $s$ such that $\mathrm{RM}_q(s,m) \subseteq \mathcal C$. Then, we use the Pellikaan-Wu list-decoding algorithm for Reed-Muller codes to decode $\mathcal C$ plus some brute force. 

\item Third line refers to the algorithm of Section \ref{sub:dec3}, an intermediate case between the above two options. In this case, we use again the Pellikaan-Wu list-decoding algorithm for Reed-Muller codes to decode $\mathcal C$.

\item Fourth line refers to the algorithm of Section \ref{21.04.05}. More precisely, we compute the smaller integer $s$ such that $\mathcal C  \subseteq  \mathrm{Cube}_q(s,m)$. Then, we use the algorithm described in Section \ref{21.04.05} for cube codes to decode $\mathcal C$. 

\item Fifth line refers to the specific algorithm known for $\mathcal C$ described in Section \ref{21.04.06}. 
\end{itemize}
The third column describes the number of calls to the corresponding decoder. The last column represents the minimum distance of the auxiliary code that we are using in each case.

\begin{table}[ht]
\begin{center}
\renewcommand{\arraystretch}{1.2}
\begin{tabular}{|c|c|c|c|c|}
\hline
\multirow{3}{*}{} & Error & \multirow{2}{*}{Number} & \multirow{2}{*}{Called} & \multirow{2}{*}{Minimum} \\
& correcting & \multirow{2}{*}{of calls} & \multirow{2}{*}{algorithm} & \multirow{2}{*}{distance} \\
& capability & & &\\ \hline \hline
\multirow{5}{*}{$d=257$} & 
$E_1 = 16$ & $1$ & $\mathrm{Dec}(\mathrm{RM}_{q}(31,2))$ & $\delta(\mathrm{RM}_{q}(31,2)) = 32$\\ \cline{2-5}
 & $E_2 = 155$ & $q^{134}$ & $\mathrm{Dec}(\mathrm{RM}_{q}(23,2))$ & $\delta(\mathrm{RM}_{q}(23,2)) = 288$\\ \cline{2-5}
& $E_3 = 32$ & $q^8$ & $\mathrm{Dec}(\mathrm{RM}_{q}(30,2))$ & $\delta(\mathrm{RM}_{q}(30,2)) = 64$\\\cline{2-5}
& $E_4 = 24$ & $56$ & $\mathrm{Dec}(\mathrm{RS}_{q}(23))$ & $\delta(\mathrm{Cube}_{q}(24,2)) = 81$\\ \cline{2-5}
& $E_5 = 23$ & $1$ & $\mathrm{Dec}(\mathrm{Hyp}_{q}(d,2))$ & $\delta(\mathrm{Hyp}_{q}(d,2)) = 257$\\ \hline \hline

\multirow{5}{*}{$d=225$} & 
$E_1 = 14$ & $1$ & $\mathrm{Dec}(\mathrm{RM}_{q}(34,2))$ & $\delta(\mathrm{RM}_{q}(34,2)) = 29$\\ \cline{2-5}
 & $E_2 = 137$ & $q^{156}$ & $\mathrm{Dec}(\mathrm{RM}_{q}(24,2))$ & $\delta(\mathrm{RM}_{q}(24,2)) = 256$\\ \cline{2-5}
& $E_3 = 15$ & $q$ & $\mathrm{Dec}(\mathrm{RM}_{q}(33,2))$ & $\delta(\mathrm{RM}_{q}(33,2)) = 30$\\\cline{2-5}
& $E_4 = 15$ & $57$ & $\mathrm{Dec}(\mathrm{RS}_{q}(24))$ & $\delta(\mathrm{Cube}_{q}(24,2)) = 64$\\ \cline{2-5}
& $E_5 = 19$ & $1$ & $\mathrm{Dec}(\mathrm{Hyp}_{q}(d,2))$ & $\delta(\mathrm{Hyp}_{q}(d,2)) = 225$\\ \hline \hline

\multirow{5}{*}{$d=193$} & 
$E_1 = 13$ & $1$ & $\mathrm{Dec}(\mathrm{RM}_{q}(36,2))$ & $\delta(\mathrm{RM}_{q}(36,2)) = 27$\\ \cline{2-5}
 & $E_2 = 118$ & $q^{182}$ & $\mathrm{Dec}(\mathrm{RM}_{q}(25,2))$ & $\delta(\mathrm{RM}_{q}(25,2)) = 224$\\ \cline{2-5}
& $E_3 = 14$ & $q^3$ & $\mathrm{Dec}(\mathrm{RM}_{q}(35,2))$ & $\delta(\mathrm{RM}_{q}(35,2)) = 28$\\\cline{2-5}
& $E_4 = 15$ & $58$ & $\mathrm{Dec}(\mathrm{Cube}_{q}(25,2))$ & $\delta(\mathrm{Cube}_{q}(25,2)) = 49$\\ \cline{2-5}
& $E_5 = 15$ & $1$ & $\mathrm{Dec}(\mathrm{Hyp}_{q}(d,2))$ & $\delta(\mathrm{Hyp}_{q}(d,2)) = 193$\\ \hline \hline

\multirow{5}{*}{$d=150$} & 
$E_1 = 12$ & $1$ & $\mathrm{Dec}(\mathrm{RM}_{q}(39,2))$ & $\delta(\mathrm{RM}_{q}(39,2)) = 24$\\ \cline{2-5}
 & $E_2 = 83$ & $q^{212}$ & $\mathrm{Dec}(\mathrm{RM}_{q}(27,2))$ & $\delta(\mathrm{RM}_{q}(27,2)) = 160$\\ \cline{2-5}
& $E_3 = 12$ & $q^{6}$ & $\mathrm{Dec}(\mathrm{RM}_{q}(38,2))$ & $\delta(\mathrm{RM}_{q}(38,2)) = 25$\\\cline{2-5}
& $E_4 = 8$ & $60$ & $\mathrm{Dec}(\mathrm{RS}_{q}(27))$ & $\delta(\mathrm{Cube}_{q}(27,2)) = 25$\\ \cline{2-5}
& $E_5 = 9$ & $1$ & $\mathrm{Dec}(\mathrm{Hyp}_{q}(d,2))$ & $\delta(\mathrm{Hyp}_{q}(d,2)) = 150$\\ \hline \hline 

\multirow{5}{*}{$d=65$} & 
$E_1 = 8$ & $1$ & $\mathrm{Dec}(\mathrm{RM}_{q}(47,2))$ & $\delta(\mathrm{RM}_{q}(47,2)) = 16$\\ \cline{2-5}
 & $E_2 = 49$ & $q^{343}$ & $\mathrm{Dec}(\mathrm{RM}_{q}(29,2))$ & $\delta(\mathrm{RM}_{q}(29,2)) = 96$\\ \cline{2-5}
& $E_3 = 8$ & $q^{6}$ & $\mathrm{Dec}(\mathrm{RM}_{q}(46,2))$ & $\delta(\mathrm{RM}_{q}(46,2)) = 17$\\\cline{2-5}
& $E_4 = 3$ & $62$ & $\mathrm{Dec}(\mathrm{RS}_{q}(29))$ & $\delta(\mathrm{Cube}_{q}(29,2)) = 9$\\ \cline{2-5}
& $E_5 = 5$ & $1$ & $\mathrm{Dec}(\mathrm{Hyp}_{q}(d,2))$ & $\delta(\mathrm{Hyp}_{q}(d,2)) = 65$\\ \hline \hline 

\multirow{5}{*}{$d=15$} & 
$E_1 = 3$ & $1$ & $\mathrm{Dec}(\mathrm{RM}_{q}(56,2))$ & $\delta(\mathrm{RM}_{q}(56,2)) = 7$\\ \cline{2-5}
 & $E_2 = 7$ & $q^{64}$ & $\mathrm{Dec}(\mathrm{RM}_{q}(48,2))$ & $\delta(\mathrm{RM}_{q}(48,2)) = 15$\\ \cline{2-5}
& $E_3 = 4$ & $q^{3}$ & $\mathrm{Dec}(\mathrm{RM}_{q}(55,2))$ & $\delta(\mathrm{RM}_{q}(55,2)) = 8$\\\cline{2-5}
& $E_4 = 0$ & $64$ & $\mathrm{Dec}(\mathrm{RS}_{q}(31))$ & $\delta(\mathrm{Cube}_{q}(31,2)) = 1$\\ \cline{2-5}
& $E_5 = 0$ & $1$ & $\mathrm{Dec}(\mathrm{Hyp}_{q}(d,2))$ & $\delta(\mathrm{Hyp}_{q}(d,2)) = 15$\\ \hline \hline

\end{tabular}
\end{center}
\caption{Comparison between the five different algorithms described above to decode $\mathrm{Hyp}_{32}(d,2),$ for different values of $d$.}
\label{tabla-final}
\end{table}

In Table~\ref{tabla-final} we observe that the algorithm with the greatest error correcting capability is always achieved by the method given in the second line. However, the huge amount of calls to the decoder makes it highly impractical. The third method always corrects more errors than the first one, but requires more calls to the decoder. Concerning the third, fourth and fifth method, we find instances where each of the methods outperforms the others. All the algorithms we propose except the fifth one, rely on a known decoder for either a Reed-Muller or a cube code. As a consequence, a better decoder for any of these codes would imply better error correction capability. Interestingly, when decoding in terms of cube codes, we reduce many times the problem to a code with a promisingly high minimum distance, but then we use a decoding algorithm with poor error correcting capability. We consider that it is an interesting problem to find better decoding algorithms for cube codes which, in particular, will lead to better decoding algorithms for hyperbolic codes.

\bibliographystyle{plain}

\end{document}